\DeclareMathAlphabet{\mathpzc}{OT1}{pzc}{m}{it}
\newcommand{\mydef}[1]{\textit{#1}}
\newcommand{\blockgraph}{blockgraph}
\newcommand{\Blockgraph}{Blockgraph}
\newcommand{\initN}{\textit{init}}
\newcommand{\initB}{\initN\ node}
\newcommand{\startN}{\textit{start}}
\newcommand{\startB}{\startN\ node}
\newcommand{\updateN}{\textit{update}}
\newcommand{\updateB}{\updateN\ node}
\newcommand{\closeN}{\textit{close}}
\newcommand{\closeB}{\closeN\ node}
\newcommand{\acceptN}{\textit{accept}}
\newcommand{\acceptB}{\acceptN\ node}
\newcommand{\showPlease}[1]{#1}
\newcommand{\hidePlease}[1]{}
\newcommand{\mainDirLoc}{}
\newcommand{\hideB}[1]{\showPlease{#1}}
\newcommand{\extraB}[1]{\hidePlease{#1}}
\newcommand{\hideC}[1]{\showPlease{#1}}
\newcommand{\extraC}[1]{\hidePlease{#1}}
\newcommand{\hideD}[1]{\hidePlease{#1}}
\newcommand{\BlockgraphDefinition}{\begin{definition}[\Blockgraph]
A graph~$G$ is considered a \blockgraph\ if all of the following applies:
\begin{itemize}
\item Every node of~$G$ has to be a \startB, \acceptB, \updateB\ or \initB.
\item There must be exactly one \initB\ in~$G$.
\item The parent of every node, except the~\initB\ that has no parent, must also appear in~$G$, and it must belong to the same bank (unless the parent is the \initB).
\item For every \updateB\ in~$G$, all its referenced nodes also appear in~$G$.
\item There is an edge from each node to its parent, and edges from each \updateB\ to its referenced nodes. There must be no other edges in~$G$.
\item $G$ has no cycles.\footnote{A cycle means there is a cycle of cryptographic hash values, where each value is the hash of its previous, and we assume that producing such a cycle is impossible.}
\item Let~$v$ be a node in~$G$. We consider the (longest) sequence of nodes that starts at~$v$ and where every other node in the sequence is the parent of its predecessor. 
Ignoring the \initB\ and the \updateB s in this sequence, we must get an alternating sequence of \startB s and \acceptB s where the last node is a \startB.
\end{itemize}
\end{definition}}
\newcommand{\BlockgraphSubgraph}{\begin{lemma}
\label{lem:subgraph}
Let~$G$ be a \blockgraph, and let~$G'$ be a subgraph of~$G$.
If every node in~$G'$ has the same number of outgoing edges as in~$G$, then~$G'$ is also a \blockgraph.
\end{lemma}
\begin{proof}
All the requirements of \blockgraph s are trivially true for~$G'$, except the second and last requirements.
We start with the last requirement.
Let~$v$ be a node in~$G'$.
The parent of~$v$ also appears in~$G'$, as there is an outgoing edge from~$v$ to its parent.
But then also the parent of~$v$ has an edge to its own parent, and so on.
Thus, we get the same sequence of nodes as in~$G$, and so the requirement must hold.
Following the above case, that sequence of nodes must end at some point (as there are no cycles in~$G'$).
The only node that has no parent is the \initB.
Thus, the \initB\ also appears in~$G'$.
\end{proof}}
\newcommand{\LemmaPositive}{
Before proving \Cref{lem:positive}, we start with a technical auxiliary lemma:
\begin{lemma}
Let~$G$ be a a valid and proper \blockgraph.
We can construct a sequence of graphs with the following properties:
\begin{itemize}
\item The first graph in the sequence contains only a single node - the \initB\ of~$G$.
\item Each graph extends the previous graph in the sequence by exactly one node.
\item The last graph is equal to~$G$.
\item All the graphs in the sequence are valid and proper \blockgraph s.
\end{itemize}
\end{lemma}
\begin{proof}
Recall that \blockgraph s contain no cycles.
Thus, we can define a topological ordering on the nodes of~$G$.
For every node of~$G$, all the nodes in its subgraph come after it in the topological order.
We construct the sequence of graphs such that the $n$'th graph contains the $n$ last nodes in the topological order and the edges that connect them.
This sequence clearly answers the first three requirements.

Every graph in this sequence is indeed a \blockgraph\, according to \Cref{lem:subgraph}.
It is also valid, because if one of its nodes is invalid, then it will be invalid also in~$G$.
Assume by way of contradiction that one of these graphs,~$G'$, is not proper.
That means that there are in~$G'$ a pair of \acceptB s that don't know the \closeB s of each other, or a pair of \closeB s that don't know the \startB s of each other.
The same pair also appear in~$G$, and have the same subgraphs in~$G$ and~$G'$.
Thus, they also don't know the \closeN\ or \startN\ node of each other in~$G$ and so~$G$ is not proper, in contradiction with the assumption that it is.
\end{proof}
We shall now prove the main lemma:
\begin{proof}
Let~$G$ be a valid and proper \blockgraph.
We construct the sequence of graphs according to the above lemma.
Note that every graph in this sequence is valid and proper.
We shall prove by induction that the current lemma applies for each of these graphs, and so it applies also to~$G$ (the last graph in the sequence).
The first graph contains only the \initB\ of~$G$.
In an \initB\ all the sums of money are positive, so the lemma clearly holds for the first graph.

Assume that the lemma is true for the $n$'th graph in the list,~$G_1$.
We prove it is also true for the $(n+1)$'th graph,~$G_2$.
Note that $G_2$ extends~$G_1$ by only a single node.
If this node is a \startB, \closeB\ or \updateB, then there is no difference in the balance computation between~$G_1$ and~$G_2$ so the lemma holds for~$G_2$ as well.

If this node is an \acceptB, then we can compute the total balance according to~$G_2$ by starting from the balance of~$G_1$, 
and then apply the transactions that should be accepted according to the new \acceptB.
We denote the new \acceptB\ by~$v_a$ and its matching \startB\ by $v_s$.
Let~$t$ be a transaction that appears in~$v_s$, accepted in~$v_a$,
and was not applied in the balance computation of~$G_1$.
Note that~$t$ is one of those transactions that will be applied after we already computed the balance according to~$G_1$.
Let~$u$ be the user (the account) that transfers money according to~$t$.
As~$v_s$ is valid, the balance that~$u$ has according to the subgraph of~$v_s$ is no less than the money that should be transferred in~$t$.
We denote $v_s$'s subgraph by~$G_s$.
$u$ can ``lose'' money between $G_s$ and~$G_1$ only if it has accepted transactions in~$G_1$ that aren't accepted yet in $G_s$.
However, we claim that all the accepted transactions of $u$ in~$G_1$ are already accepted in~$G_s$.
Thus, the balance of~$u$ according to~$G_1$ is no smaller than the balance of~$u$ according to~$G_s$, and after applying~$t$ in~$G_2$ the balance of~$u$ remains non negative, as claimed.

We still need to prove that indeed all the accepted transactions of $u$ in~$G_1$ are already accepted in~$G_s$.
As~$v_s$ is valid, we can see in its subgraph ($G_s$) accepted transactions of~$u$, with all the sequence numbers up to one less than the sequence number of~$t$ (each transaction with a different sequence number).
Assume by way of contradiction that there is an accepted transaction of~$u$ in~$G_1$ that doesn't appear in~$G_s$.
If that transaction has a sequence number smaller than the sequence number of~$t$, then it is conflicting with another transaction, which is impossible as~$G_1$ is proper.
It cannot have an equal sequence number to~$t$, as it cannot be conflicting with~$t$, from the same reason as above (because~$G_2$ is proper), and it cannot be identical to~$t$, because~$t$ is first accepted in~$v_a$ (recall that~$t$ wasn't applied in the balance computation of~$G_1$).
If it has a bigger sequence number, then the \startB\ at which that transaction appears must have in its subgraph an accepted transaction with the sequence number as~$t$, which we have just proved to be impossible.
\end{proof}}
\newcommand{\LemmaProper}{\begin{proof}
Let~$G$ be a valid and proper graph, that satisfies the condition of the lemma.
I.e., more than two thirds of the voting power in every subgraph of~$G$ is in the hands of banks that are not malicious according to~$G$.
Let~$v$ be a valid node that doesn't appear in~$G$, but whose subgraph is also a subgraph of~$G$ (so we can extend~$G$ by~$v$).
Let~$G_v$ be~$G$ extended by~$v$. 

$G_v$ is non-proper if it contains a pair of either (1) \acceptB s that don't know the \closeB\ of each other, or (2) \closeB s that don't know the \startB\ of each other.
We call a pair of such \acceptB s or \closeB s an \mydef{ignorant pair}.
As~$G$ is proper, it contains no ignorant pairs.
The only way that~$G_v$ will be non-proper is if~$v$ forms an ignorant pair with one or more nodes from~$G$.
This is possible only if~$v$ is a \closeB\ or an \acceptB.
We start by assuming that~$v$ is an \acceptB.
The proof for~$v$ that is a \closeB\ is identical.

Assume by way of contradiction that~$G_v$ is non-proper.
Thus, it contains one or more ignorant pairs (where each pair includes~$v$ and another node from~$G$).
We want to find a subgraph of~$G_v$ that contains exactly one ignorant pair of nodes.
We will prove that it is impossible for the two nodes of this pair to gather the required supporting voting power, so one of them must be invalid, which means that either~$v$ or~$G$ are invalid, contradicting the assumption that they are both valid.

Considering~$G_v$, if there is only one pair of ignorant nodes, then we can keep on with the proof.
Assume otherwise that there is more than one such a pair.
We can obtain subgraphs of~$G_v$ by removing nodes from~$G_v$.
However, if we remove a node we must also remove 
all the nodes whose subgraphs contain the removed node, or otherwise we will remain with a subgraph that is not a \blockgraph.
Note that by removing nodes, the set of ignorant pairs can only shrink (new ignorant pairs cannot be created that way).

Let us consider two pairs of ignorant nodes.
Recall that~$v$ appears in both pairs.
Let~$v_1$ be the second node of the first pair, and let~$v_2$ be the second node of the second pair.
At least one of~$v_1$ or~$v_2$ must not acknowledge the other, or otherwise there will be a cycle in the graph and \blockgraph s have no cycles.
$v$ don't acknowledge any of~$v_1$ or~$v_2$, or otherwise those pairs weren't ignorant pairs.
Thus, we can safely remove (at least) one of $v_1$ and~$v_2$ (one of them that doesn't acknowledge the other) and remain with only one of the two ignorant pairs.
We will keep doing so repeatedly for every two ignorant pairs, and finally we shall remain only with a single pair, as required.

Let~$G'$ be the subgraph of~$G_v$ that we found, that contains only a single pair of ignorant nodes.
Let~$v_a$ and $v_b$ be the two nodes in this pair (note that one of them is~$v$, but we shall use~$v_a$ and~$v_b$ for convenience).
We want to prove that the sum of voting power that supports~$v_a$ and~$v_b$ is less than 4/3 of the total voting power, so they cannot both have the required voting power, and one of them must be invalid.

We shall consider three different subgraphs of~$G'$: 
(1)~$G_a$, the subgraph of~$v_a$, (2)~$G_b$, the subgraph of~$v_b$, and (3)~$G_{ab}$, the intersection of $G_a$ and $G_b$.
It can be easily seen (following \cref{lem:subgraph}) that the three subgraphs are valid and proper \blockgraph s (recall that the only ignorant pair in~$G'$ contains $v_a$ and $v_b$, and they don't appear in any of the above graphs).
We compute the voting power distribution according to each of these graphs.
According to the assumptions, more than 2/3 of the voting power in each of these distributions belongs to non-malicious banks.

We claim that a non-malicious bank might either support~$v_a$ according to~$G_a$, or support~$v_b$ according to~$G_b$, but not both.
Proof:
Assume by way of contradiction that a non-malicious bank,~$B$, supports both nodes.
This means that~$B$ has one or two \updateB s whose subgraphs contain the \closeB s of~$v_a$ and~$v_b$ (recall that we assumed, w.l.o.g, that~$v_a$ and~$v_b$ are \acceptB s).
In return,~$v_a$ and~$v_b$ have these \updateB s in their own subgraphs ($G_a$ and $G_b$ respectively).
Let $v_1$ and $v_2$ be the (earliest) \updateB s of~$B$, such that the \closeB\ of~$v_a$ appears in the subgraph of~$v_1$, and the \closeB\ of~$v_b$ appears in the subgraph of~$v_2$.
Without loss of generality, assume that $v_1$ appears before~$v_2$ in~$B$'s chain, or that $v_1=v_2$.
This means that~$v_1$ appears in $v_2$'s subgraph (or that $v_1=v_2$).
As~$v_b$ has~$v_2$ in its subgraph, that means that it also has~$v_1$ in its subgraph, and so it also has the \closeB\ of~$v_a$ in its subgraph, contradicting the assumption that $v_a$ and $v_b$ form an ignorant pair.

Recall that a voting power is ``valid'' if it belongs to a non-malicious bank, or shared by group of non-malicious banks, and ``invalid'' otherwise.
Considering $G_{ab}$'s voting power distribution, assume that the sum of valid voting power is~$x$, and the sum of non-valid voting power is~$y$. 
Note that $x+y$ is the total voting power in the system.
For simplicity, we assume that $x+y=1$.
According to the assumption we have that $y<1/3$.
The valid voting power belongs to non-malicious banks.
As we have seen, a non-malicious bank might support at most one of $v_a$ and $v_b$.
Invalid voting power, however, might belong to banks that support both nodes.
Thus, the sum of voting power that supports $v_a$ and $v_b$ according to the voting power distribution of $G_{ab}$ is at most $x+2y$.
Note that $x+2y=(x+y)+y=1+y<1+1/3=4/3$, as required.
We want to prove that this is also true when considering the correct supporting voting power -- of $v_a$ according to $G_a$, and of $v_b$ according to $G_b$.
For that cause we must consider the changes in the voting power distribution when moving from $G_{ab}$ to $G_a$ and $G_b$.
We shall prove that the sum of valid voting power in~$G_{ab}$, $x$, is still divided between the nodes, so the sum of $x+2y$ is still relevant (as an upper bound).

The major change in the voting power distribution will be due to new \acceptB s, in $G_a$ and $G_b$, that will change the total balance distribution.
Such new \acceptB s can appear only for matching \startB s that already appear in $G_{ab}$.
I.e., there cannot be a completely new block (from \startB\ to \acceptB) that appears in $G_a$ or $G_b$.
Proof:
Assume by way of contradiction that such a new block appears in~$G_a$.
This means that a new \acceptB, $v'$, appears in the subgraph of~$v_a$ together with a matching \startB, and where that matching \startB\ doesn't appear in the subgraph of~$v_b$ (or else it would also appeared in~$G_{ab}$).
There must be also a matching \closeB\ that appears in the subgraph of~$v_a$ (as~$v'$ appears in it) but not in the subgraph of~$v_b$ (or else the matching \startB\ would also appear there).
Because the only ignorant pair in~$G'$ are $v_a$ and~$v_b$, it follows that 
$v'$ must know $v_b$'s \closeB\ or else it will form another ignorant pair together with~$v_b$, as~$v_b$ doesn't have the \closeB\ of~$v'$ in its own subgraph. 
As~$v_a$ knows~$v'$, it will also know $v_b$'s \closeB,
contradicting
the original assumption that~$v_a$ and~$v_b$ are ignorant pairs.
Thus, a new \acceptB\ can appear only if its \startB\ already appears in $G_{ab}$.

Except for new \acceptB s (for existing \startB s), there might be also new \startB s without matching \acceptB s (but possibly with matching \closeB s, in case $v_a$ and $v_b$ are indeed \acceptB s\footnote{If $v_a$ and $v_b$ are \closeB s, then from the same reasoning as before, we will get that a new matching pair of \startB\ and \closeB\ cannot appear in~$G_a$ or~$G_b$, or else $v_a$ and $v_b$ wouldn't be ignorant pair.}).
The only effect that such new open blocks might have on the voting power distribution is that existing amounts of voting power will be shared with additional banks because of new uncertain transactions.
Such sharing cannot increase the supporting voting power of $v_a$ or $v_b$, because if that voting power didn't support them in the first place, then sharing it with additional bank won't change it.
However, if that voting power supported them beforehand, then now it might not support them, in case the new bank doesn't support them.
Thus, such new \startB s might only decrease the supporting voting power of the nodes.
As we are interested in the maximal sum of supporting power, we will ignore such new \startB s, and consider only the effect of new \acceptB s (for existing \startB s).

Recall that when computing the voting power distribution, we start with computing the balance of each user.
It can be seen as if each user has amount of voting power that equals to his balance, and each user delegates his voting power to his bank.
However, if the graph contains uncertain transactions that belong to a specific user, then the bank of that user has to share some of this voting power with other banks that might receive that power according to 
those transactions. 
To conclude, we can divide the total voting power according to the amount that each user contributes (the user's balance), and further divide each such amount to shared and non-shared voting power.

We are interested in voting power that was valid in~$G_{ab}$.
Such valid voting power is based on balances of users that belong to non-malicious banks (the voting power of other users is delegated to malicious banks, and so it is invalid).
Even if the user belongs to a non-malicious bank, the shared voting power that he contributes will be invalid in case one of the banks that share it is malicious.
We shall consider the valid voting power that each user (that sits in non-malicious bank) contributes in~$G_{ab}$, and we shall see what happens with this voting power in~$G_a$ and~$G_b$.
Recall that we defined~$x$ to be the sum of valid voting power in~$G_{ab}$.
$x$ is exactly the sum of all the contributions of valid power from those users.

Let~$u$ be a user of a non-malicious bank~$B$.
Concerning~$B$, we have already proved that it might either support (exactly) one of~$v_a$ or $v_b$, or not support any of them.
Assume that it supports~$v_a$.
In this case it cannot have a new \acceptB\ that appears in~$G_b$.
Proof:
Assume by way of contradiction that a new \acceptB\ of~$B$ does appear in~$G_b$.
Denote that new \acceptB\ by~$v_1$.
That means that~$v_1$ appears in the subgraph of~$v_b$.
It cannot appear also in~$G_a$, as otherwise it would appeared also in~$G_{ab}$, and then it won't be ``new''.
As~$B$ supports~$v_a$, according to~$G_a$, then one of its nodes in~$G_a$ has the matching \closeB\ of~$v_a$ 
in its subgraph.
As this node appears in~$G_a$ while~$v_1$ does not, then it must come before~$v_1$ in~$B$'s chain (recall that~$B$ is non-malicious, so it has no forks).
Thus,~$v_1$ has the matching \closeB\ of~$v_a$ in its subgraph, and so does~$v_b$, contradicting the assumption that $v_a$ and $v_b$ are ignorant pair.

Recall that we are currently considering what happens with the valid voting power in~$G_{ab}$ that contributed a user~$u$ (in bank~$B$).
Assume, once again, that~$B$ supports~$v_a$.
This means that the non-shared part of the voting power that~$u$ contributes supports~$v_a$, and its shared voting power either supports~$v_a$ or not, depending on the identities of the other banks.
The voting power of~$u$ cannot support~$v_b$, because we have already seen that~$B$ cannot support both $v_a$ and $v_b$.
The question is if the voting power that belonged to~$u$ in~$G_{ab}$ might be later used to support~$v_b$ in~$G_b$.
This is possible only if a transaction of~$u$ is accepted in~$G_b$ (and not in~$G_{ab}$) such that some of its money is transferred to another bank, that does support~$v_b$.
Denote that transaction by~$t$.
We proved above that a new \acceptB\ of~$B$ cannot appear in~$G_b$.
Thus, if~$t$ is accepted in~$G_b$ and transfers money outside of~$B$, then it must be an uncertain transaction in~$G_{ab}$. 
This means that the voting power that will be transferred is shared, according to~$G_{ab}$.
If that voting power goes to a malicious bank, then this power was invalid in~$G_{ab}$ (as it was shared with a malicious bank), and is not of our concern.
Assume otherwise that it goes to a non-malicious bank,~$B'$.
We want this voting power to support~$v_b$.
This means that~$B'$ should support~$v_b$.
As~$B'$ is non-malicious, it cannot also support~$v_a$.
As~$t$ is an uncertain transaction in~$G_{ab}$ that attempts to transfers money to~$B'$, the corresponding voting power (in~$G_{ab}$) does not support~$v_a$.
The question is if that voting power still won't support~$v_a$ when considering the voting power distribution at~$G_a$.
The only way that voting power might indeed support~$v_a$ is if~$t$ will cease to be ``uncertain'' in~$G_a$ (but not by being accepted), so the corresponding amount of money will be no longer shared with~$B'$.
This can be either because~$t$ was rejected at a new \acceptB, or otherwise that another transaction with the same sequence number was accepted.
We shall now prove that both options are impossible.

We start with contradicting the first case, where~$t$ is rejected in~$G_a$.
Note that~$t$ has to be both accepted in~$G_b$ and rejected in~$G_a$ by two conflicting \acceptB s for the same \startB\ at which~$t$ appears in~$G_{ab}$.
Clearly that means that~$t$ appears in a malicious bank's block.
For it to be both accepted and rejected, there must be also two conflicting \closeB s, where one of them sees another conflicting transaction in another block, and the other does not.
The result is that we have two conflicting \closeB s with two matching (and conflicting) \acceptB s for the same bank.
They all appear in the union of~$G_a$ and $G_b$, and so they also appear in~$G'$.
Each of the \acceptB s cannot have in its subgraph the \closeB\ of the other, as this results in invalid graph (a malicious bank cannot identify its own malice).
Thus, they form a pair of ignorant nodes in~$G'$.
The only ignorant nodes in~$G'$ are~$v_a$ and~$v_b$.
However, those two conflicting \acceptB s cannot be~$v_a$ and~$v_b$, as they appear in~$G_a$ and~$G_b$ which are the subgraphs of~$v_a$ and~$v_b$ respectively, and according to our definition of subgraph, a node does not form part of its own subgraph.
Thus, it cannot be that~$t$ is accepted in~$G_b$ and rejected in~$G_a$.

The second case is where another transaction of~$u$, different than~$t$ but with the same sequence number, is accepted in~$G_a$. 
As~$t$ gets accepted in~$G_b$, we have two conflicting transactions that get accepted in~$G'$ (which contains both~$G_a$ and~$G_b$).
This means that the two \startB s that contain those conflicting transactions have a matching pair of ignorant \closeB s (or else at least one of the transactions wouldn't be accepted). 
However, the only ignorant pair in~$G'$ are~$v_a$ and~$v_b$, and from the same reasoning above, those ignorant \closeB s cannot be~$v_a$ and~$v_b$ themselves.

Recall that we considered above the valid voting power of a user~($u$) whose (non-malicious) bank~($B$) supports~$v_a$.
We have seen above that if some of its voting power will support~$v_b$ according to~$G_b$, then it won't support~$v_a$ in~$G_a$.

The next case is if~$B$ supports neither $v_a$ nor $v_b$.
In this case the voting power of~$u$ in~$G_{ab}$ couldn't support~$v_a$ nor $v_b$, as~$B$ doesn't support them.
The question is if that voting power might later support both of them~--~$v_a$ in~$G_a$ and~$v_b$ in~$G_b$.
This is only possible if~$u$ has a transaction that transfers money to a bank,~$B''$, that supports both.
This is true because (1) two conflicting transactions cannot be accepted, as it requires an ignorant pair of \closeB s, and the only such pair that exist is the pair of~$v_a$ and~$v_b$, and (2) two consequent transactions of the same user cannot be accepted, because the \startB\ that contains the later transaction must know the \acceptB\ that accepted the first transaction, and so the block that accepts the second transaction is a new block in~$G_a$ or~$G_b$ (both its \startB\ and \acceptB\ don't appear in~$G_{ab}$), which we proved to be impossible.
If that~$B''$ supports both~$v_a$ and~$v_b$, then it is malicious (as non malicious banks can support at most one of them).
If the transaction that transfers the money to~$B''$ does not appear in~$B$, then it is an uncertain transaction in~$G_{ab}$, and then that voting power is considered invalid in~$G_{ab}$, and is not of our concern.
Assume otherwise, that this transaction appears in a block of~$B$.
For this voting power to support both~$v_a$ according to~$G_a$ and~$v_b$ according to~$G_b$, a corresponding \acceptB\ must appear both in~$G_a$ and in~$G_b$.
As~$B$ is not malicious, it can have only one such \acceptB, that can appear at most in one of~$G_a$ or $G_b$ (or else it would appear also in~$G_{ab}$).
Thus, at most one of~$v_a$ or $v_b$ can be supported by this voting power.

The result is that the valid voting power from~$G_{ab}$ of every user that sits in non-malicious bank can support at most one of $v_a$ or~$v_b$ (in~$G_a$ and~$G_b$ respectively).
This is what we claimed, and so the total supporting voting power is less than 4/3, and at least one of~$v_a$ and~$v_b$ cannot be valid.
\end{proof}}
\newcommand{\Protocol}{
\begin{itemize}
\item When receiving a transaction from a user, check if it may be rejected, according to the rejection restriction requirement (see \Cref{sec:model}). If it may be rejected, \mydef{reject} it. Otherwise, add it to a list of waiting user transactions.
\item If you can create a \startB\ (your previous \startB\ has a matching \acceptB) and your list of waiting transactions is not empty, then extract the transactions from the list, keep the ones that are still valid according to your current graph (i.e., that cannot be rejected according to the rejection restriction), and create a new \startB\ from those transactions (as long that at least one of the transactions was valid indeed). \mydef{Reject} the transactions that aren't valid anymore.
\item If you have an open transactions block (a \startB\ with no matching \closeB, or a \closeB\ with no matching \acceptB) and you can validly create an \acceptN/\closeB, then create it. When creating a \closeB, \mydef{Reject} transactions that appear in the current \startB\ that have conflicting transactions according to the current \closeB's subgraph.
\item When receiving a node from another bank, check if it is already included in your \blockgraph.
If not, then check if the nodes it references are included in your \blockgraph.
If not, ask the other banks for these nodes.
If (or once) your \blockgraph\ includes all the nodes that are referenced by the node you received (but doesn't include that node itself), then make sure that (1) the node is valid, and (2) that adding it to your graph won't make your graph improper.
If it satisfies both criteria, then add the node to your \blockgraph, and create a new \updateB\ that references it.\footnote{In a practical implementation we don't have to respond with an \updateB\ if this won't deliver any important information for the protocol. E.g., there is no importance to acknowledge an acknowledgment to a previous acknowledgment of your own node.}
\item After creating a node in one of the above cases, send this node to all other banks.
\item After adding an \acceptB\ to the graph, \mydef{Accept} the transactions that should be applied according to the graph.
\end{itemize}
We want to prove that executions of our system fulfill the requirements we defined in \Cref{sec:model} (\textbf{Agreement}, \textbf{Positive Balance}, \textbf{Termination} and \textbf{Rejection Restriction}).
We start with the \mydef{rejection restriction} requirement.
According to the protocol, the transaction~$t$ that is submitted to a valid bank~$B$ will be rejected if~(1)~$B$ is allowed to reject it according to the rejection restriction requirement, or (2) if it appears in a \startB\ of~$B$, and according to its matching \closeB\ there is another conflicting transaction (a different transaction of the same account with the same sequence number). 
In both cases the rejection restriction is not violated.

The positive balance requirement holds by \cref{lem:positive} because the graph of every valid bank is always proper (as we don't accept nodes that cause it to be improper).

Agreement holds if every \acceptB\ that is accepted by one valid bank, will be accepted by all the other valid banks.
Such \acceptB\ must be valid, or else it wouldn't have been accepted by the first valid bank.
Thus, it won't be accepted by another bank only if it will make its graph improper.
If the voting power distribution according to every valid bank's \blockgraph, at every time point, is such that more than two thirds of the voting power is valid (i.e., in the hands of valid banks), then adding a new, single, valid node won't make any of these graphs improper, following \cref{lem:proper}.

After agreement we left only with the termination requirement.
While practically termination should pose no concern, it is theoretically problematic.
We call it \mydef{the problem of migrating power}, and it generally applies to every cryptocurrency system where the possible number of admins is unbounded and where some resource is used as the base for decisions.
The problem with resources is that they might migrate. 
One day they are in one hand, the second they are in another.
If the lazy user sends his transaction to an admin that already lost its power, 
then that admin will have to forward it to the one that is now in power (that holds a big share of resources). 
But, what if by the time this transaction reaches the second admin, the power has already migrated to another admin?
More generally, if the time it takes the power to migrate is smaller than the delay of messages from those admins that have already lost the power (not very reasonable), then a user that sent his transaction to an ``old'' admin will never get a response.

In practice this problem doesn't seem to make sense, and anyway a user that sees that his transaction doesn't get to the right locations can resubmit his transaction to the current power owners.
Power migration is problematic only if the power migrates infinitely, going through infinite number of admins.
Now, let's be realistic, our coin won't live forever.
In that finite time that it does, there is a finite set of banks that will ever had any voting power, and in particular a finite set of valid banks.
Thus, if we assumed before that two thirds of the voting power is always in the hands of valid banks, then we can further assume that there is a finite set of valid banks that together always hold more than two thirds of the voting power.
Following this assumption, we shall prove that termination holds.

We start with the following claim:
Let~$B$ be a valid bank, that created a \startB, but didn't create yet a matching \closeB.
We claim that eventually~$B$ will create the matching \closeB.
The same claim (with the same proof) is that if~$B$ created a \closeB\ with no matching \acceptB, then eventually it will create that matching \acceptB.

Proof:
According to the protocol, $B$ will create the \closeB\ at the moment that it can.
It can create such node only if the sum of voting power that supports the matching \startB\ is at least two thirds of the total voting power.
When~$B$ created the \startB, it sent it to all other banks.
Now, recall that by the assumption there is a finite set of banks that more than two thirds of the voting power is always in their hands.
We denote this set by~$S$.
The banks in~$S$ will eventually accept that \startB\ of~$B$ (once they accepted the nodes that it references), as it is valid, and it won't make their graph improper (as we have seen before).
After they accept it, they will create corresponding \updateB s, that reference this \startB, and they will send their \updateB s back to~$B$.
In return, $B$~will create new \updateB s that reference the \updateB s it received from the banks of~$S$.
At this point~$B$ has the support of all the banks from~$S$.
Considering the voting power distribution, more than two thirds of the voting power is in their hands, so~$B$ has the required supporting voting power, and it will create a \closeB.

Now we can easily prove that the termination requirement holds.
Proof:
Let~$B$ be a valid bank that receives a transaction from its user.
According to the protocol,~$B$ will put the transaction in its next \startB, or otherwise it will reject it (if it is non valid).
The only thing that might prevent~$B$ from creating its next \startB\ is if it already has an open \startB\ without a matching \closeB\ or \acceptB.
According to the above claim,~$B$ will eventually create matching \closeB\ and \acceptB, and then it will also create a new \startB\ that will contain the required transaction.
Once again, according to the above claim,~$B$ will eventually create matching \closeB\ and \acceptB\ to the new \startB\ as well.
Once it created the \acceptB, it also either accepts or rejects the transaction.

To summarize, we proved above the following claim:
\begin{lemma}
Let $\mathbb{B}$ be the set of possible public keys of banks, let $r$ be an execution of our system and let $\mathbb{B}'\subseteq\mathbb{B}$ be the subset of valid banks in~$r$.
If there is a finite set  $\mathbb{B}''\subseteq\mathbb{B'}$ such that at every time point the voting power distribution according to the \blockgraph\ of every $B\in\mathbb{B'}$ is such that more than $2/3$ of the voting power is in the hands of banks from~$\mathbb{B''}$, then~$r$ satisfies the requirements defined in \Cref{sec:model}.
\end{lemma}

\subsection{Implications}
The protocol and settings defined above are rather limited, and intended to provide only the minimum that is required in order to have a cryptocurrency that achieves the requirements we defined under asynchronous communications.
We provide here some details and implications we ignored.

The first case concerns allegedly malicious users.
According to the \blockgraph\ definition, a user that submitted conflicting transactions might not be able to submit any additional transaction, which makes its money unusable.
This can be seen as a punishment to such user, but it might be a too harsh punishment, as it might be an honest mistake, with no malicious intentions.
To overcome it, we can allow a user to submit a ``group of transactions'' with a single sequence number.
Such group won't be considered as conflicting with any subset of the transactions it contains.
 
Another concept that wasn't discussed above is the commission.
We have mentioned in the introduction that a user will pay commission for each of his issued transactions.
The simplest approach is that the commission will be a fixed percentage of the transaction sum, that is given to the bank of the user that issued the transaction. 
Another option is to divide the commission between the user's bank (probably about an half of the commission) and the banks that supported the block at which that transaction appeared (as they all took part in the agreement process). 
If we want to encourage participation of as many banks as possible, then we can define the commission percentage to be variable, where this percentage gets bigger as more banks (voting power) support the block.
As the issuing bank receives a fixed share of the commission, the commission it receives will be bigger as more banks support its block.
This incentivizes the issuing bank to ask for the acceptance of as many banks as it can.

However, sharing the commission between the banks that supported the block might cause problems if the same transaction appears in blocks of different banks, as it is not clear then according to which block we divide the commission.
Such case is possible if the user submitted a transaction to his bank and got no response, so he sent it also to another bank, and as a result both banks might create blocks that contain this transaction.
In order to solve this problem we can define that the commission distribution is computed only according the block of the original bank of the user.
If that bank stops responding before it manages to accept its block, then the amount of commission that should have been taken remains effectively as inaccessible money.}
\newcommand{\Execution}{
Let $(A,I,\mathbb{B},P)$ be a cryptocurrency system.
An execution of the system is defined by the tuple $(X,Users,Admins,F,Tx,Msgs,Ac,Re)$.
$X$ is a (possibly infinite) set of nodes that act as users and admins.
$Users$ is a map $X\to2^{A}$ that assigns each node a set of account numbers, and $Admins$ is a map $X\to2^{\mathbb{B}}$ that assigns each node a set of admin IDs.
For every $x_1,x_2\in X$, if $x_1\neq x_2$ then $Admins(x_1)$ and $Admins(x_2)$ are disjoint sets.
If $a\in Users(x_1)$, then we say that~$x_1$ represents the account number~$a$.
The same goes with admins.
A node that represents an account number 
is considered a \mydef{user}.
A node that represents an admin 
is considered an \mydef{admin}.
Note that a node can be both a user and an admin.
The nodes communicate by message passing.
We assume a fully connected network.
Users can create transactions and send them to the admins.
The admins can send messages between them and accept/reject user transactions.
Admins should send messages and accept/reject transactions only if they should do so by the protocol.
Admins that don't do so are considered malicious.
Malicious admins can perform arbitrary operations, but they cannot send messages that contain data that was digitally signed by someone else, unless they received that signed data beforehand. 
The same goes with accepting/rejecting transactions.

$F$ is a map $X\to\{\mathbb{R}^{\geq0}\cup\infty\}$ that defines for every $x\in X$ the time when~$x$ fails by crashing.\footnote{A more practical definition should include for every node the time when it becomes active. We avoided doing so for ease of exposition.}
If the time value is ``$\infty$'', the node never crashes.
A node that crashed cannot create transactions, send/receive messages or accept/reject transaction after it crashed.
An admin is considered valid if it is not malicious and if its representing node doesn't crash.

$Tx=\{(tx,b,t)\}$ is the set of transactions that were submitted by the users, where $tx$ is a transaction (as defined in \Cref{sec:model}), $b\in\mathbb{B}$ is the admin to whom the transaction was submitted, and $t\in\{\mathbb{R}^+\cup\bot\}$ is the time that the node that represents~$b$ received~$tx$, where~$\bot$ means that it never receives it.
$t$ can be $\bot$ only if either the node that represents~$b$ or some node that represents the source account of~$tx$ crashes.
If~$t\neq\bot$, then both the node that represents~$b$ and some node that represents~$a$ must not crash before~$t$.

$Msgs=\{(b_s,b_t,m,tSend,tReceive)\}$ is the set of messages that are sent between admins (or, more precisely, between nodes that represent admins).
$b_s,b_t\in\mathbb{B}$ are the source and destination admins respectively, both must be represented by some nodes. 
$m$ is the message contents, ${tSend\in\mathbb{R}^+}$ is the time the message was sent, and $tReceive\in\{\mathbb{R}^+\cup\bot\}$ is the time the message was received.
Assume that $x_s,x_t\in X$ are the nodes that represent $b_s$ and $b_t$ respectively.
If~$x_s$ crashes at time~$t'$, then $tSend<t'$.
If $tReceive=\bot$ this means that the message was never received, which is possible only if~$x_s$ or~$x_t$ crashes.
Otherwise, $tReceive>tSend$ and if~$x_t$ crashes at time~$t'$ then $t'>tReceive$.

$Ac=\{(b,tx,t)\}$ is the set of accepted transactions, where $b\in\mathbb{B}$ is the admin, $tx$ is a transaction that was submitted by a client and $t\in\mathbb{R}^+$ is the time that~$b$ \mydef{accepted}~$tx$.
$Re$ is a similar set that describes the \mydef{rejected} transactions.

Using the above definition, we can examine if a given execution satisfies the requirements defined in \Cref{sec:model}.
}
\begin{document}
\title{Cryptocurrency with Fully Asynchronous Communication based on Banks and Democracy} 
\author{Asa Dan}
\institute{}
\maketitle
\centerline{\email{asa.dan.te@gmail.com}}~\\
\centerline{\large{\today}}
\begin{abstract}
Cryptocurrencies came to the world in the recent decade and attempted to offer a new order where the financial system is not governed by a centralized entity, and where you have complete control over your account without the need to trust strangers (governments and banks above all).
However, cryptocurrency systems face many challenges that prevent them from being used as an everyday coin.
In this paper we attempt to take one step forward by introducing a cryptocurrency system that has many important properties.
Perhaps the most revolutionary property is its deterministic operation over a fully asynchronous communication network, which has sometimes been mistakenly considered to be impossible.
By avoiding any temporal assumptions, we get a system that is robust against arbitrary delays in the network, and whose latency is~only a function of the actual communication delay. 
%
The presented system is based on familiar concepts~--~banking and democracy.
Our banks, just like normal banks, keep their clients' money and perform their clients' requests.
However, because of the cryptographic scheme, your bank cannot do anything in your account without your permission and its entire operation is transparent so you don't have to trust it blindly.
The democracy means that every operation performed by the banks (e.g., committing a client transaction) has to be accepted by a majority of the coin holders, in a way that resembles representative democracy where the banks are the representatives and where each client implicitly delegates his voting power (the sum of money in his account) to his bank.
A client can switch banks at any moment, by simply applying a corresponding request to the new bank of his choice. 
\quad The presented approach employs the advantages of centralization while still providing a completely trustless and decentralized system. 
By employing concepts from everyday life and attaining high throughput and low latency for committing transactions, 
the hope is that this paper will lay the foundations for a cryptocurrency that can be truly used as a practical coin on a daily basis. 
\end{abstract}

\keywords{cryptocurrency, Bitcoin, consensus, altcoin, asynchronous}



\section{Introduction}\label{sec:intro}
%
%
In its most basic form, cryptocurrency is a digital coin 
accompanied by cryptographic tools that provide several benefits.
In a digital coin, your money is just a number (the amount of money you have). 
In today's world, these numbers are usually kept by the banks.
As a result, you have to trust your bank, as it is the only entity that can tell how much money you have (and whom you must address in order to use this money).
\hideD{If your bank suddenly decides that you have half the money you had before (maybe because the government told him to do so), you are in problem.
}On the contrary, in the common cryptocurrency scheme the idea is to deploy a decentralized and trustless system where there is no single entity that can be trusted to tell you how much money you (or someone else) have.
Instead, the amount of money you have should be determined by consensus among the other users of the cryptocurrency -- everyone should agree, somehow, that you have that specific amount of money.
Moreover, this vague consensus should be reached regarding all the transactions (money transfers) that are committed.
This is required in order to prevent \mydef{double spending}, where one can use the same coin more than once, 
without its recipients knowing this.
Once there is a consensus on a payment, the corresponding recipient can be sure that he received the money (because everyone agrees that the money is his).
\hideC{A second payment with the same coin will not be accepted, as the corresponding (second) recipient will see that the payment is not in the consensus.
}The big question is how can you reach such consensus in a setting where you can trust no one, and where it is not generally defined who the other players are.
This is also known as  the ``Consensus in the Permissionless Model'' problem \cite{pass2017hybrid}.

Bitcoin emerged in~2008~\cite{bitcoin}
offering a solution to this problem and putting the cryptocurrency in the headlines.
However, 
\hideC{despite its glorious success (for the moment), }Bitcoin has many known flaws.
Among its flaws are the time it takes a transaction to be accepted~\cite{byzcoin}, its limited transactions throughput~\cite{lightning}, and finally, the great energy consumption involved in keeping it alive. 
There are numerous works and other cryptocoins that are trying to fix these flaws. 
\hideC{Especially, there are strong environmental and economical reasons to alleviate its energy consumption that follows from its innovative mechanism for solving the ``Consensus in the Permissionless Model'' problem.}\hideB{ More information concerning Bitcoin and other protocols appears in \Cref{sec:related}.

}The common denominator for probably all of the numerous different cryptocoins is that there are roughly two groups -- users and administrators.
The users are the simple persons/clients that want to hold coins and use them for any type of trade or investment. 
\hideB{Each user has a matching pair of secret and public keys, where the public key is the user's ``account number'', and the secret key is used by the user for creating digital signatures.
}The administrators' role is to make sure that there is consensus concerning the current balance of the user accounts and concerning the committed user transactions.
An admin can be represented by a matching pair of secret and public keys, just like the users.
The set of admins might be well defined in advance, in which case we say it is a \mydef{permissioned} settings.
Otherwise, everyone can be an admin (a \mydef{permissionless} settings).
\hideB{The administrators must invest resources such as computation, storage and bandwidth in order to manage the consensus, so it might be not suitable to simple users.
As a revenue for their efforts, the admins receive coins -- either by commission from transactions, or by creating money from thin air, i.e., stamping new money.
\hideC{The coin they receive plays a double role -- it provides both revenue for their investment, and an incentive to keep the stability of the coin, as otherwise the coin's value will drop and the administrators' true gain will decrease as well (the coins they will receive shall have lower value).

}}When a user wants to transfer money from his account to another, he creates a transaction that contains the required information, \hideB{attaches to it a matching digital signature, }and sends it to one or more administrators.
The admins should make sure that the transfer is ``legal'', and if it does, they accept it.
Accepting a transaction means that the money transfer described by the transaction was executed\hideC{ (so, from now on, the money belongs to the recipient)}.
A transaction that was accepted by one admin should be eventually accepted by all the admins. 
By applying the set of accepted transactions, we can compute how much money each user holds -- the sum of money he received minus the sum of money he transferred.
Clearly that balance must be non-negative. 
In \Cref{sec:model} we introduce a formal, and reasonable, definition for the cryptocurrency problem.

Following the above discussion, there must be consensus regarding the accepted transactions.
We should be careful, however, when using the word \textit{consensus}.
In distributed computing theory, the consensus problem \cite{benor} 
involves a set of nodes, each holding an input value, and the nodes must all agree on a \textbf{single} input value (an input value of one of the nodes).
This is similar to our case in cryptocurrency, where finally all nodes (admins) should agree which transactions have been accepted (where a user might send conflicting transactions to different nodes).
Many works on cryptocurrency (e.g. \cite{hashgraph,tendermint,tschorsch2016bitcoin,vukolic2015quest}) mention the FLP impossibilty result \cite{FLP} that states that consensus cannot be solved when the communication channels between the nodes are asynchronous, and where nodes might fail (by stopping to respond).
\hideB{A communication channel is considered asynchronous if a message that is sent over the channel can suffer an arbitrary delay. 
}Following the FLP result, it might be deduced that a (deterministic) cryptocurrency system cannot assume asynchronous communications.
The first immediate result presented in this paper shows that this is \textbf{wrong}. 
In \Cref{sec:solution} we define a deterministic system that solves the cryptocurrency problem that is defined in \Cref{sec:model}, and does so over asynchronous communication network.
It follows that the cryptocurrency problem is not as hard as the classical consensus problem.
The crucial difference between the two problems is in the way we deal with conflicting values/transactions.
In the classical consensus problem, if there is a set of conflicting values, the nodes must accept on a single value out of the conflicting ones.
In cryptocurrency, transactions are conflicting when a user attempts to double spend -- pay the same coin twice.
In such a case, we claim that we don't have to accept one of the conflicting transactions -- we can reject them both.
Rejecting both conflicting transactions makes sense, as such conflicting transactions result from invalid behavior of a user.
Irresponsible users should be aware that their transactions might be rejected.

In order to take advantage of the fact that cryptocurrency is easier to solve than consensus, we cannot use \mydef{blockchain} based solutions.
A blokchain 
is basically an ordered list of all the accepted transactions.
The order of the list is accepted by everyone.
Agreeing on such a total order is as hard as solving the classical consensus problem \cite{chandra1996unreliable}, and so it is unsolvable under asynchronous communications according to the FLP.
While convenient, total order of the transactions is not necessary.
For example, if Alice transfers \$20 to Charlie, and Bob transfers \$30 to Charlie, there is no importance which transfer comes first.
\hideD{The order is important only for the outgoing transactions of a single user, to make sure that the user doesn't pay the same coin twice.}

Note that most if not all the existing cryptocoins operate using asynchronous communication channels.
However, none of them is known to meet the requirements we define\hideB{ (and particularly those that are blockchain based cannot meet the requirements)}.
More precisely, many of the existing cryptocoins operate in non determinisitc fashion, and so they can meet the requirements we define with high probablity.
However, they often trade latency for safety \hideC{-- the probability for success goes higher as we wait more time before accepting a transaction }\cite{bitcoin}.\hideC{
Such latency is clearly unwanted.}

As the cryptocurrency problem involves some sort of (weaker) consensus, and consensus is roughly achieved when the majority accepts on something, we need to find a way to define majority.
Algorithms that operate in permissioned settings are usually based on the number of existing nodes.
E.g., if more than two thirds of the nodes accepts a transaction, then it can be considered accepted.
In permissionless settings, the number of nodes is not known and is not relevant, as everyone can produce arbitrary number of nodes.
Instead, a resource that cannot be easily replicated must be used.
Arguably the most common resource that is used is computation power, as in Bitcoin. 
Another common option, that we shall employ, is to use the coin itself.
In the following subsection we describe our permissionless approach for solving the cryptocurrency problem.

\subsection{Banking and Democracy}

%
Recall that the administrators are the ones that manage the consensus, concerning the existing balance and accepted transactions.
However, we claim that the consensus should be between the users.
I.e., it is the users who should be interested in the consensus. 
Without consensus there is no value to the coins they hold, as there is no agreement on the amount of coins each user holds.
The more coins you hold, the more responsibility you have for the cryptocurrency's future, as you will lose more if its value will drop.
The way we can coordinate between all the different users is by means of democracy -- letting the majority of the ``people'' decide.
However, the ``people'' in our case will not be the users, bur rather the coins.
As coins belong to users, we will provide each user with voting power that is proportional to the amount of coins he owns.
The problem is that the users are too numerous, and asking all of them to vote on each decision (i.e., transaction) is not practical.
Recall that in a democracy the people don't need to accept every rule, they just need to choose representatives to make the decisions for them (or on their behalf).
Thus, our users simply need to choose representatives. 
The most trivial representatives are the administrators, whose job is to maintain the consensus.
We shall now discuss the identities of the administrators.

In the real world we don't keep the money ourselves\hideC{ (at least most of our money)}.
Instead, we let someone -- the bank, to keep it for us.
When we want to use this money we simply address our bank\hideC{ (either directly or indirectly\footnote{Most often we address our credit card company, and they address 
our bank.})}.
In cryptocurrency, on the other hand, the amount of money you have is decided by agreement between the admins.
The advantage in such a scheme is that you are not depended on a single bank but rather on the agreement between multiple admins.
Moreover, the operations of the admins are completely transparent and, in fact, each of us can become an admin (or at least can gather the information that they get) and check for himself the correctness of the consensus.
Yet, there is some convenience in the banks scheme where each user has a single address to all of his requests\hideC{ (convenient both for the user and for the entire system)}.
Our solution merges the responsibilities of everyday banks and cryprocurrency admins by making the admins to function as banks. 
Each user will choose a bank (an administrator) where his money will be deposited, and whom he should directly address for every request.
Of course the user must also have the option to switch banks, so he won't lose his money in case his bank fails or otherwise ignores his requests. 
A bank, just like a user, will have a pair of secret and public keys.
The user's account number will be the combination of his bank's public key and his own public key.
\hideC{The bank's public key can be seen as the ``branch number'', while the user's public key is the ``inner account number'' in that specific bank.}

Let's assume I ask my bank to transfer money from my account to another.
As we are living in a democracy, the transfer must be accepted by the holders of the majority of the coin\hideC{ (i.e., a group of coin holders that together possess a majority of the money must accept this transfer)}.
In our approach, each bank represents its clients and `votes' on their behalf.
Roughly speaking, each bank gets its voting power according to the sum of money in the user accounts it manages. 
Once the banks agree on a transaction, we can see it as if the money holders themselves agreed on that transaction\hideC{ (as each money holder delegated his voting power to his bank)}.
{\quad }We shall now briefly describe the implemented protocol.
In blockchain based coins there is a single global ledger (the \mydef{blockchain}) that lists all the transactions, and all the admins (should) agree on this ledger.
In our system every admin shall have a private blockchain of its own, that lists (mostly) the transactions its clients committed.
That blockchain might also include a transaction of a user of another bank, in case that user asks to leave his original bank and move to this bank. 
If we want to compute the balance of a client, observing only the blockchain of his bank is not enough, as it doesn't list money that he receives from clients of other banks.
The information about such money transfers is found in the blockchains of the other banks (the banks whose clients transferred that money).
As banks must be able to compute such account balances, 
they must hold all of the existing blockchains.
In fact, each block in a given blockchain will contain pointers to blocks in other blockchains. 
More information appears in \Cref{sec:settings}.

\hideD{
So, when I send a transaction to my bank, my bank will check this transaction to make sure that I have the required money to spend, then it will probably group it with transactions of other users into a block, and it will chain this block to its private blockchain.
Next, my bank will send this new block to all the other banks, in order to receive their acceptance for the block.
Once a majority of banks (by means of voting power) sent their acceptance for the block, my bank has the proof that it should be accepted. 
More information concerning the exact protocol appears in \Cref{sec:prot}.}

The presented approach includes many advantages.
Among else it presents a completely decentralized and trustless system, an agreement mechanism without superfluous energy consumption and a deterministic cryptocurrency that operates over asynchronous channels and (theoretically) achieves low latency.
From the user side, just as in real life, you address all of your requests directly to your bank.
The banks have incentives to give their clients good service, for otherwise the clients will move to other banks (and less clients means less income).

\section{Related Works}\label{sec:related}
Maybe the greatest challenge of cryptocurrency is reaching consensus on the money balance and/or committed transactions.
The most prominent approach to this problem, introduced by Bitcoin \cite{bitcoin}, is the famous blockchain.
The difficult question is how a new block is added to a given chain, in such a way that everyone will agree that this block is indeed the next block in the chain.
Bitcoin's approach is to do so by a race based on computational power.
\extraB{That race causes a great energy waste.}\hideB{
The chance one has to win such a race is roughly equal to its computation power divided by the overall computation power that participate in this race. 
A winner of a race manages to create a block that will be accepted by everyone. 
}\hideD{If a node receives a block that is in the end of a chain that is longer than the chain it currently knows of, then it should forget the older chain, and adopt the new chain.
By assuming that most\footnote{
According to \cite{selfish}, even if most of the computation power is in good hands, it still might be not enough for achieving fair results.}
of the computational power is in ``good'' hands, that will follow the above protocol (i.e., adopting longer chains), it is argued that it will benefit everyone to follow that protocol as well.
Thus, if one node manages to create a new block and spread it to everyone before anyone else does, that block will be in consensus.
}This concept is called Proof-of-Work (PoW).
The great downside of PoW is its resource consumption.\footnote{It should be mentioned though that the requirement for resource consumption plays another important role -- it makes the option to alter history unlikely, as one has to invest more resources than all the resources that have been used, starting from the point of required change in history.}
In order to \hideB{alleviate the resource consumption}\extraB{avoid the energy waste}, the idea of Proof-of-Stake (PoS) \cite{pos} was introduced, where instead of deciding the next block by stochastic means, based on ownership of external resources (such as computation power), we can decide it based on inherent resources -- the coin itself.
The idea is that if you have more of the coin, you can get the chances to create more blocks. 
~ In our case we use the coin as a voting power, that is delegated by its holders to their banks, and where transactions are accepted by majority of the voting power. 
Delegating voting power is not new \cite{nano,dpos,neo}.
\hideC{The advantages of delegating the voting power is that we can apply consensus by `democracy', where the coin holders are effectively those that decide.
}Maybe the most prominent problem in delegating power is the \mydef{indifferent user}.
The problem is when users don't really care to whom they delegate their power.
The result might be that most of the honest users delegate their power to non functioning or malicious administrators\hideC{, or even don't delegate their power at all (in which case the rest of the users, many of whom might have malicious intentions, effectively have more power than their fair share)}.
In our approach, the users cannot truly be indifferent. 
Our users must choose a bank, and as they can get service only from that bank, they cannot delegate their voting power to a bank that is not functioning or that was proved to be malicious\hideC{ (a bank that acts maliciously will be ignored by the rest of the banks, so it won't be able to give services to its users)}.
%

Regardless the method been used to overcome the permissionless settings, using a blockchain results in relatively low throughput and high latency.
\hideB{
The reason for this problem is that, effectively, only one admin at a time can add a new block to the chain, which causes great congestion. 
}In order to solve this problem, one option is to do things outside the main chain (e.g.~\cite{lightning}), and update the main chain only when necessary\hideC{ (less frequent updates shall lead to less congestion)}.
Another option, however, is to completely avoid a main blockchain.
While blockchain has the convenient feature that it defines a total order over all the committed transactions, this convenience comes with a great cost, and is not truly necessary. 
\hideB{However, we still need some sort of partial order.
For example, if Alice wishes to transfer some money to Bob, we want to see that Alice received enough money \textbf{before} that, so she indeed has enough money in her account to perform that transfer.
}For keeping only partial order, we can use a DAG (directed acyclic graph). 
For example, each of~\cite{iota,nano,spectre,blockclique,hashgraph} uses some sort of a DAG (implicitly or explicitly). 
\hideB{

Instead of agreeing on a single blockchain, the admins will agree on this graph (the DAG).
}Each node in the \hideB{graph}\extraB{DAG} can contain user transactions, similar to a block in the blockchain.
%
%
Maybe the simplest use of DAG is as a set of parallel blockchains that are connected between them\extraB{ \cite{hashgraph,blockclique,unchain}}.
\hideB{For example, every admin can have a blockchain of its own (a chain of blocks that all of them belong to that admin), and every block in such a blockchain can reference blocks from blockchains of other admins (e.g., the ``gossip'' graph in \cite{hashgraph}).
In such case there is no competition between admins, and no congestion, as every admin can freely create new blocks in its own chain.
However, the problem is that conflicting transactions might appear on different blockchains, in blocks that have been created concurrently.
Conflicting transactions are transactions that cannot be applied together.
E.g., if Alice has \$40 and she issues two transactions in which she transfer \$20 to Bob and \$30 to Charlie, then clearly the two transactions cannot both take effect.
When using a single blockchain this was easy -- one of the two transactions had to appear before the other, so the second transaction would become invalid. 
But what should we do if these two transactions appear in two different blocks that were added concurrently to a graph at two different locations? 
Different DAG-based cryptocoins deal with such a problem in different ways.
A simple solution is that transactions of the same ``type'', that might be conflicting, are allowed to be introduced only at a specific location in the graph, so they cannot appear concurrently at different locations.
E.g., if our graph is indeed a set of parallel blockchains, then we can divide the transactions according to their source accounts, such that transactions of a specific account are allowed to appear only in blocks of a specific blockchain \cite{nano,blockclique,unchain,consensusNumber}.

In the above paragraph we mentioned two possible properties for a DAG that is constructed from parallel blockchains:
(1) That there will be a separate blockchain for every admin, so there will be no congestion, and (2) that conflicting transactions cannot be issued on different blockchains.
Combining both of these properties can have a good effect.
However, if even one admin stops responding, then certain transactions cannot be issued, which is clearly unacceptable.
Another option, of \cite{nano}, is that instead of one for every admin, there is a blockchain for every user.
This could be ideal, if we could trust all the users. 
However, as a malicious user can create conflicting blocks in its chain, there must be some sort of agreement on the actual blockchains.

Our solution maintains a blockchain for every admin(/bank), and users should submit transactions to the specific admin they chose (their bank), so that conflicting transactions are not supposed to appear in different blockchains.
However, the users can also choose to submit a transaction to another bank, when they want to transfer their account to that bank.
Thus, it is not really guaranteed that there won't be conflicting transactions (though, they should be less frequent -- only when users switch banks).
}

Arguably the strongest theoretical result that appears in this paper is that a deterministic cryptocurrency ststem can be implemented over asynchronous communications.
A similar idea appears also in \cite{consensusNumber}.
However, they assume that each account has a single process through which it submits its transactions (i.e., a single admin), which is not practical, as users won't be able use their money if the only admin that can publish their transactions has stopped responding.



\section{Model}\label{sec:model}
We define a cryptocurrency system by the tuple~$(A,I,\mathbb{B},P)$ where $A$ is a set of possible account numbers, $I:A\to\mathbb{R}^{\ge0}$ defines an initial balance for each account, $\mathbb{B}$ is a set of possible admin IDs and~$P$ is the protocol that every admin should follow.
$A$ and $\mathbb{B}$ might be infinite sets.
We assume that each account number has a matching private key that can be used to create digital signatures in a way that everyone can verify but no one can fake.
Accordingly, computing a public key from a private key should be easy but not vice versa. 
In permissioned settings, there would be $|\mathbb{B}|$ known nodes, that each will have a specific ID from~$\mathbb{B}$.
In permissionless settings, on the other hand, we can assume that $\mathbb{B}$, similarly to~$A$, represents a set of public keys, that each has a matching private key.
In this paper we focus on permissionless settings.

A transaction is defined by~$(s,t,m,i,d)$ where $s,t\in A$ are the source and destination accounts respectively, $m\in\mathbb{R}^+$ is the amount of money to be transferred from~$s$ to~$t$, $i\in\mathbb{N}$ is a sequence number and~$d$ is a digital signature of $(s,t,m,i)$ created using the private key of~$s$.
The first transaction issued by~$s$ should have the sequence number $i=1$, and the sequence number should be increased by 1 for every following transaction.
The sequence number has two uses: (1) It allows the creation of two identical money transfers\hideC{ (transferring the same amount to the same destination)}, and (2) It allows us to define below a simple requirements from the system.\footnote{Alternatively, we could employ Bitcoin's transaction mechanism, where each transaction consumes previous transactions.} 

In every execution of the system, there are a set of nodes that act as \mydef{users} and \mydef{admins}.
Every user knows the private key of one or more accounts.
Every admin has one or more distinct IDs from~$\mathbb{B}$. 
During the execution, the users create transactions and send them to the admins.
The admins can send messages between them and \mydef{accept} transactions or \mydef{reject} them.
Accepting a transaction means that the money transfer described by the transaction was executed (so, from now on, the money belongs to the recipient).
A transaction that was rejected might be later accepted, but not vice versa.
I.e., accepting a transaction is irreversible.
\hideC{This is important, as the recipient of the money might assume that he already got the money, and he might deliver some other good in return.
}By applying the set of accepted transactions we can compute how much money each user holds -- his initial balance according to~$I$ plus the sum of money he received minus the sum of money he transferred.

We use a permissionless settings, with asynchronous communications and byzantine crashes.
We say that an admin is \mydef{valid} if it follows the prescribed protocol and if it doesn't crash.
A formal definition for executions under these settings appears in \Cref{appendix:exec}.  
An execution is considered valid if it satisfies the following properties:
\begin{itemize}
\item \textbf{Agreement}: A transaction accepted by a valid admin will be eventually accepted by all valid admins.
\item \textbf{Positive Balance}: At every time point, and for every valid admin, applying the set of transactions that were accepted by that admin results in non-negative balance in all the accounts.
\item \textbf{Termination}: Every transaction that is sent from a user to a valid admin must be eventually either accepted or rejected by that admin.
\item \textbf{Rejection Restriction}: A valid admin may reject a transaction~$(s,t,m,i,d)$ only if either (1) there is an $i'\in\mathbb{N}$ such that $i'<i$ and the admin didn't accept any transaction of~$s$ with the sequence number~$i'$, (2) the user issued another transaction~$(s,t',m',i,d')$ where either $t\neq t'$ or $m\neq m'$, or (3) the balance of the user (computed by applying all the transactions that were accepted by this admin) is smaller than the amount to be transferred by the transaction.
\end{itemize}
The last requirement of rejection restriction is provided to avoid a trivial solution where all transactions are rejected.

\section{Solution}\label{sec:solution}
We now describe a cryptocurrency system whose executions satisfy the  requirements described in \Cref{sec:model} (under a specific assumption).
In our system, every admin and every user hold a pair of public and secret keys.
Denote by~$\mathbb{B}$ and~$\mathbb{C}$ the set of possible public keys for admins and users respectively (both might be infinite, possibly $\mathbb{B}=\mathbb{C}$).
We define $A=\mathbb{B}\times\mathbb{C}$ to be the set of possible account numbers.
I.e., every account number is a combination of a user's public key and an admin's public key.
%
To the admins we call \mydef{banks}.
An account number is said to belong to a specific user, under a specific bank.
A user should send his signed transactions to his own bank.
If he wants to switch banks, however, he should be able to submit to another bank -- his new bank, a transaction that transfers all his money to that bank.
\hideB{Moreover, there are cases where the user should be able to resubmit to another bank the same transaction that he has already submitted to his original bank.
An example is if the original bank of the user has stopped responding after receiving that transaction from the user, but before it managed to accept it.}
For simplicity, we assume that a user can submit every transaction to every bank of his choice.\footnote{A more practical option is that a bank will accept a transaction from an external account only if either (1) an account of it is the destination of the transaction, or (2) if that external user will attach another transaction that transfers money to an account in this bank. This way that bank has an incentive to accept that transaction.}

In order to list the accepted transactions, different cryptocurrencies use different types of ledgers that are maintained by the admins.
The most famous ledger is the blockchain.
On the contrary, we use a ledger in the form of a DAG.
We define that ledger in the subsection below, and in the subsection that follows we describe the protocol~$P$ that a bank (admin) shall follow. 

\subsection{The Blockgraph}\label{sec:settings}
We now define the \mydef{\Blockgraph} -- a DAG that we use instead of a blockchain.
In blockchain, every block contains a set of accepted transactions, and the different blocks are ``chained'' one after the other.
In our \blockgraph, every bank has such a chain of its own that contains only blocks that were issued by that bank (they must be digitally signed by it).
However, instead of blocks we talk about nodes, where each block from the original blockchain is split into several nodes. 
The beginning of a block is a \mydef{\startB} that contains the transactions to be committed.
A block ends at an \mydef{\acceptB}.
Between a pair of \startB\ and \acceptB\ there might be additional nodes, \mydef{\updateB s}, that contain references to nodes in the chains of other banks.
All the nodes between a pair of consecutive \startN\ and \acceptN\ nodes are considered as a single block. 
To define the initial coin distribution ($I$), we use an \mydef{\initB}.
The \initB\ contains a list of account numbers and a positive amount of money for each account in the list.
The list of accounts that appear in an \initB\ must be sorted, so that for every initial balance mapping~$I$, there will be a unique matching \initB.
See \Cref{fig:graph} for an example of a \blockgraph.\hideB{\\}
\extraB{A formal definition of a \blockgraph\ appears in~\Cref{appendix:blockgraph}.}

\hideB{\noindent }Each node in the graph (except the \initB) contains\hideC{ the following information:
\begin{itemize}
\item The bank number that owns that node.
\item The sequence number of the node in its bank's chain. 
\item The hash of the previous node in its bank's chain (its parent) or the hash of the \initB\ if it is the first node in the chain.
\item A digital signature of the owner bank.
\end{itemize}
}
\extraC{(1) the bank number that owns that node, (2) the sequence number of the node in its bank's chain, (3) the hash of the previous node in its bank's chain (its parent) or the hash of the \initB\ if it is the first node in the chain, and (4) a digital signature of the owner bank.
}
\hideC{
Excluding the \initB, there are three types of nodes.
Each node might contain additional information according to its type:
\begin{itemize}
\item A \startB\ contains user transactions the bank wants to apply.
\item An \updateB\ contains references to nodes of other banks.
\item An \acceptB\ contains no additional information.
\end{itemize}
}
See \Cref{fig:nodes} for an example of nodes.\hideC{\\}

\begin{figure}[t]
\begin{minipage}{.5\textwidth}
\ifthenelse{\boolean{showC}}
{\includegraphics[width=\textwidth]{\mainDirLoc 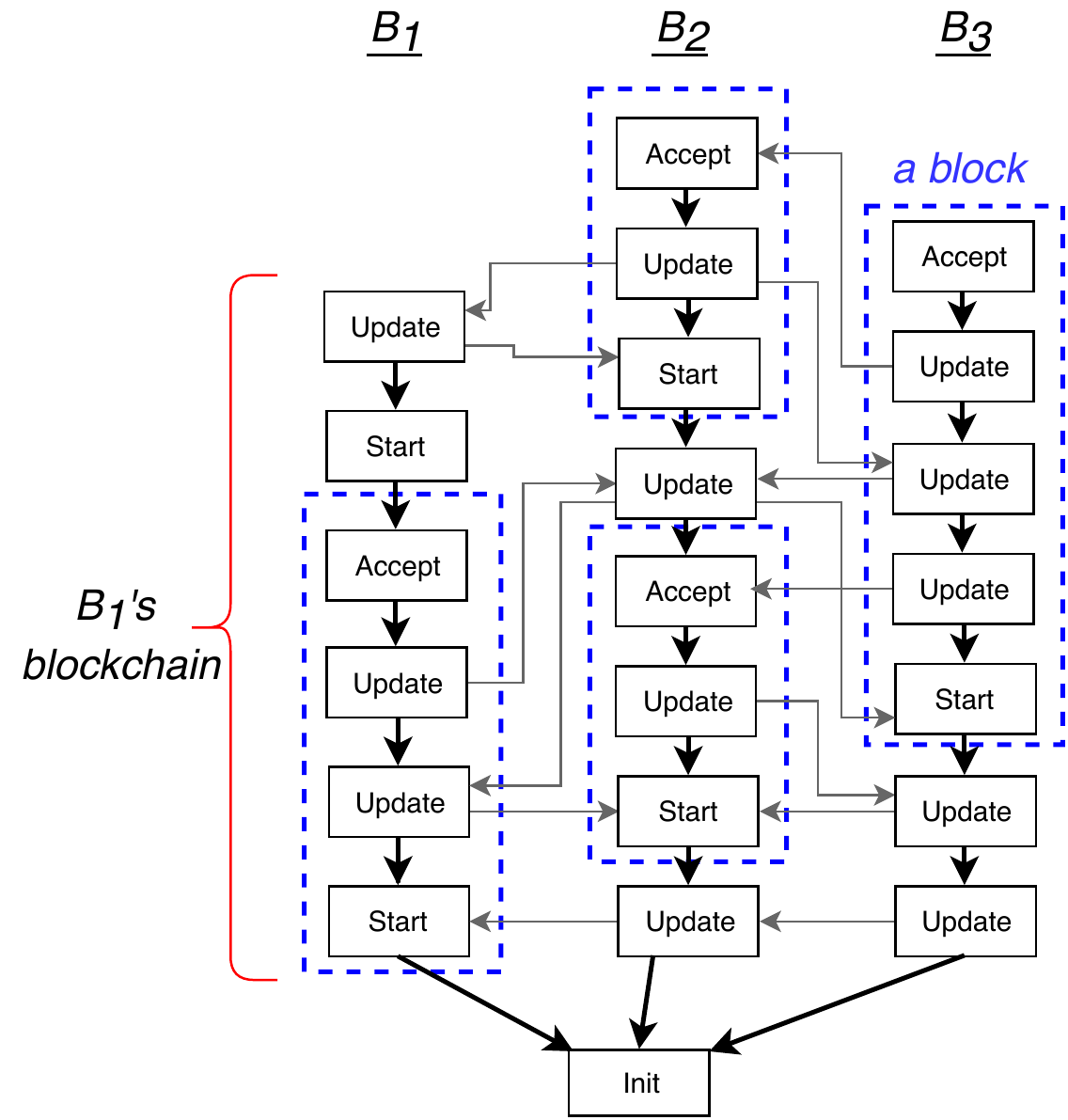}}
{\includegraphics[width=\textwidth]{\mainDirLoc 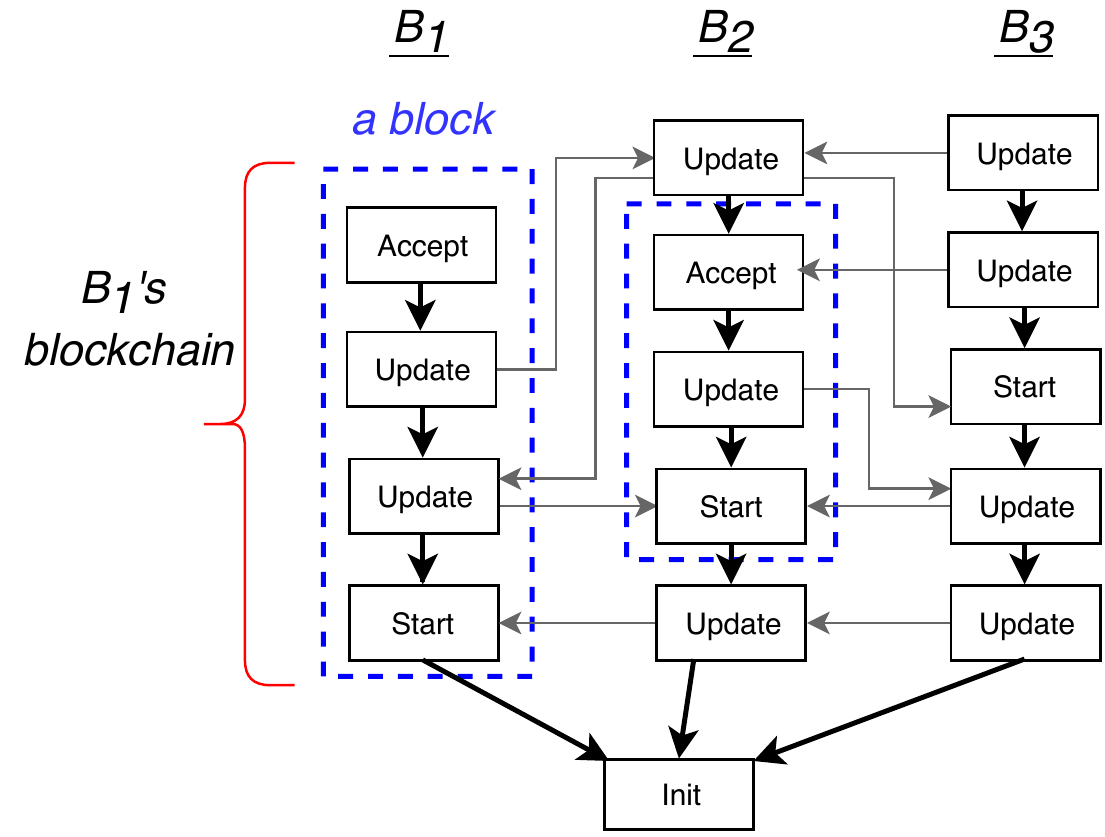}}
\caption{A \blockgraph\ with an \initB\ (bottom) and with ``blockchains'' of three banks: $B_1$, $B_2$ and $B_3$, where every such blockchain is constructed by \startB s, \updateB s and \acceptB s.
Edges from nodes to their parents appear in black, and edges from \updateB s to their referenced nodes in gray.
Note that the first node of every blockchain (the bottom node in every column) references the \initB\ as its parent.
Dotted blue rectangles mark complete blocks (a set of nodes from \startB\ to \acceptB).}
\label{fig:graph}
\end{minipage}
\begin{minipage}{.05\textwidth}
\hspace{.\textwidth}
\end{minipage}
\begin{minipage}{.45\textwidth}
\ifthenelse{\boolean{showC}}
{\includegraphics[height=18.1\baselineskip]{\mainDirLoc 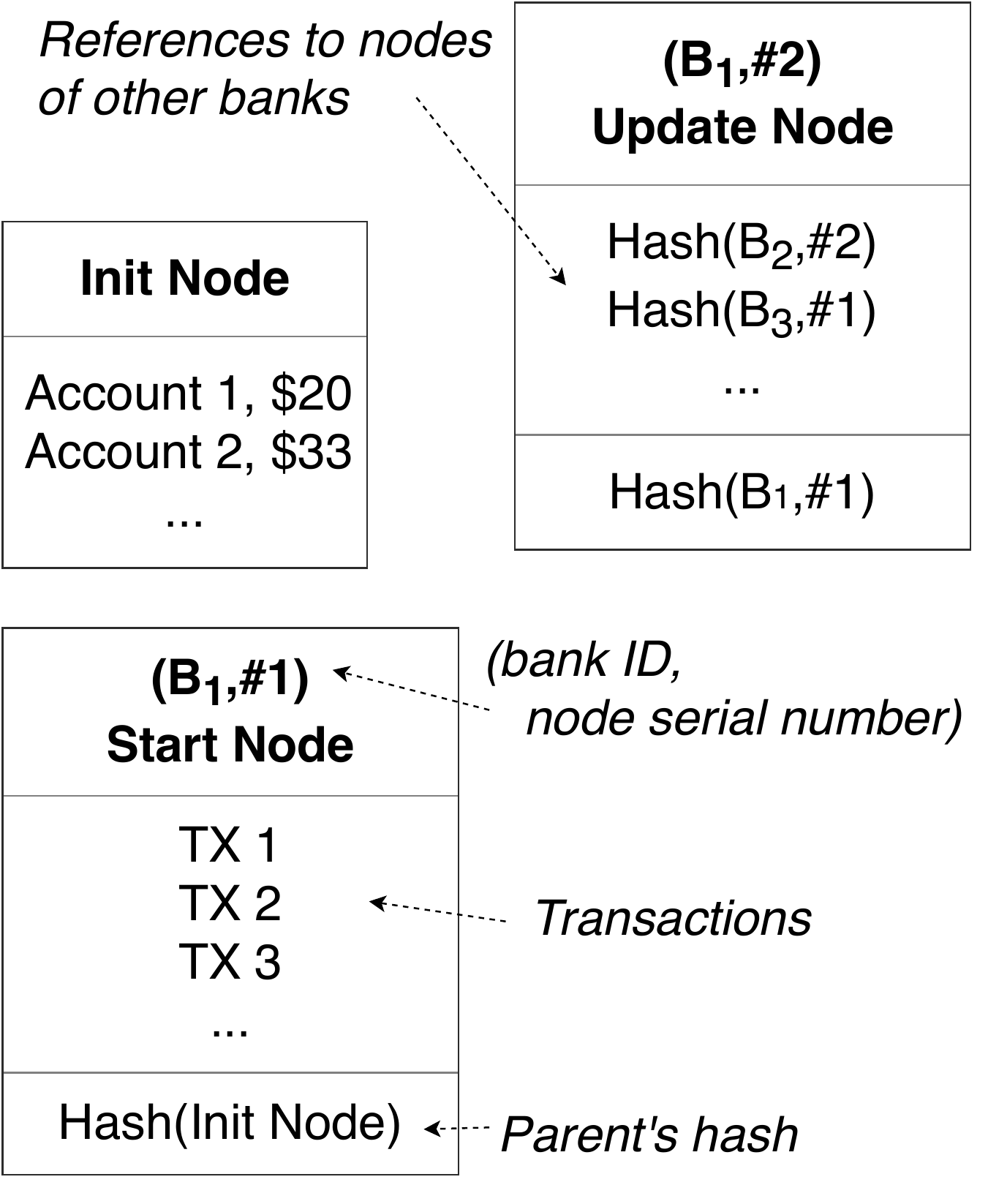}}
{\includegraphics[height=12.1\baselineskip]{\mainDirLoc Figures/Blocks.pdf}} 
\caption{Nodes example. 
An \initB\ (top left) describes an initial coin distribution, a \startB\ (bottom) lists issued transactions and an \updateB\ (top right) references nodes of other banks.
Note that the parent of the \startB\ (in this example) is the \initB, and the parent of the \updateB\ is the \startB.}
\label{fig:nodes}
\end{minipage}
\end{figure}

A bank creates a \startB\ when it wants to publish transactions it received from its users.
It then sends the issued \startB\ to the other banks.
The other banks in return can create \updateB s that reference that \startB.
A reference to a node is basically its hash together with its true id (the id of the bank that issued it, and that node's sequence number in its bank's chain).
When a bank creates such an \updateB, this means that it \mydef{acknowledges} the node it references.
Acknowledgment applies even with indirect references -- every node \mydef{acknowledges} all the nodes that are reachable from it (by means of paths in the graph). 
\hideC{For example, in \Cref{fig:graph}, the top \acceptB\ of~$B_\hideC{2}\extraC{1}$ acknowledges all the nodes in the graph excluding the two top nodes of~$B_3$.

}A bank that created a \startB\ has to wait for other banks to acknowledge its \startB, so that this node will be in ``consensus''.
To show that other banks have indeed acknowledged its node, it has to acknowledge their acknowledgments.
It does so by creating \updateB s of its own, that reference the \updateB s of the other banks (directly or indirectly).
Once enough banks have acknowledged its original \startB, and it has acknowledged those banks' acknowledgments, that bank can tell everyone to accept the transactions that are listed in this \startB.
It does so by creating an \mydef{\acceptB}, that marks the end of the block.
\hideC{This \acceptB\ doesn't need to contain any additional information, as it is a complementary node to the original \startB, and a new \startB\ cannot be issued before an \acceptB\ for the previous \startB\ has been created.}
\hideB{

We conclude this scheme by formally defining the structure of a \blockgraph:
\BlockgraphDefinition

Note that the last condition means that a given bank shall have no \acceptB\ without a matching \startB\ that comes before it, and no \startB\ that comes after another \startB\ without an \acceptB\ in the middle. \updateB s can appear everywhere.
A simple property of \blockgraph s that we shall use in the sequel is described in the following lemma:
\BlockgraphSubgraph
}

Before a bank accepts a node it received from another bank (by referencing it with an \updateB), it must check its validity.
Some of the conditions for validity are specific to the node's type, and some are general.
We start with the general conditions.
\hideC{

}Let~$v$ be a node. We say that a \blockgraph~$G$ \mydef{represents}~$v$ if~$v$ appears in~$G$, and if there are paths in~$G$ from~$v$ to every other node in~$G$.
Recall that~$v$ contains references to other nodes (their hashes) -- a reference to its parent, and possibly references to nodes of other banks, if~$v$ is an \updateB.
If we know those referenced nodes, then we can consider also the references that they contain, and so on, until we have a complete \blockgraph, which is exactly the representing graph of~$v$.
If~$v$ has no representing graph, then it is clearly not valid (it references, directly or indirectly, an impossible node).
If~$v$ does have a representing graph, we assume it is unique because
the references each node has to other nodes are defined by cryptographic hash, and we assume that it is impossible to create two different, valid, nodes with the same hash.
\hideC{Moreover, note that such two nodes must both belong to the same bank and be with the same sequence number, and possibly have additional requirements in order to fit in the right place in the graph.}\hideB{

}Considering the representing \blockgraph\ of~$v$, we remove~$v$ from this graph, and to the resulting graph we call the \mydef{subgraph of~$v$}.
\hideB{Note that the representing graph contains no cycles (as it is a \blockgraph), so there could be no edge from another node to~$v$ in that graph.
Thus, the conditions of \cref{lem:subgraph} applies for the subgraph of~$v$, and so it is also a \blockgraph.
}The subgraph of $v$ is\extraB{
also a \blockgraph\ (\Cref{appendix:subgraph}) and is}
considered valid only if all of its nodes are valid.
If the subgraph of~$v$ is not valid, then~$v$ is not valid.
%

Assuming that the subgraph of~$v$ is valid, we check $v$'s contents according to its type.
If~$v$ is an \initB, then it is trivially valid.
\extraB{The same goes with \updateB s (for now).
}\hideB{If~$v$ is an \updateB, then it could also be trivially valid, but in order to avoid waste of information we should make sure that an \updateB\ doesn't reference nodes that are already appear in its parent's subgraph.
Later on in this section, we introduce an additional requirement from \updateB s that intends to deal with malicious banks.
}\hideC{

}Before considering the requirements of \startB s, we define the way we compute the balance of each account according to a given \blockgraph.
Recall that the \blockgraph\ contains an \initB\ that encodes the initial balance, and transactions that appear in \startB s.
When computing the total balance, we consider only accepted transactions, i.e., transactions that appear in \startB s that have matching \acceptB s.
We compute the total balance by starting with the initial balance, and then applying the accepted transactions.
Note however that the same transaction might appear in more than one \startB\ (in \startB s of different banks), so we must make sure that we apply it only once.
Moreover, there might be \textbf{different} transactions of the same user with the same sequence number that appear in different \startB s of different banks.
Such transactions are conflicting, and we don't want to apply them together.
Recall that the set of nodes between a \startB\ and a matching \acceptB\ (including the \startB\ and the \acceptB\ themselves) are considered a single \mydef{block}.
We say that one block,~$b_1$, acknowledges another block,~$b_2$, if the \acceptB\ of~$b_1$ acknowledges the \startB\ of~$b_2$\hideC{ (i.e., the subgraph of the \acceptB\ of~$b_1$ contains the \startB\ of~$b_2$)}.\hideC{
For example, in \Cref{fig:graph}, the bottom block of~$B_2$ and the block of~$B_1$ both acknowledge each other (note the \updateB\ that appears below that block of~$B_2$).
On the contrary, the block of~$B_3$ acknowledges that bottom block of~$B_2$
, but not vice versa.

}
Assume that there is a pair of conflicting transactions that appear in two blocks $b_1$ and $b_2$.
If $b_1$ acknowledges $b_2$, then we don't apply the conflicting transaction from $b_1$ 
as it ``knows'', at the moment of accepting its transactions, that there is already another conflicting transaction. 
If both $b_1$ and $b_2$ acknowledge each other, then both transactions won't be applied.
The problem is if no block acknowledges the other, as both conflicting transactions will be applied.
\extraB{We shall consider that case later.}\hideB{
We shall soon consider that case, but we first conclude the total balance computation:
\begin{definition}[Total balance]
Given a \blockgraph, we take the initial balance according to the \initB, and then for every \acceptB\ we apply every transaction that appears in its matching \startB, as long as the \acceptB's subgraph doesn't contain a conflicting transaction, and as long as the same transaction wasn't already applied earlier in the computation.
\end{definition}}

We can now define the validity requirements for \startB s.
Let~$v$ be a \startB\ that belongs to bank~$B$ and whose subgraph,~$G$, is valid.
$v$~will be valid if all the transactions that it contains are valid.
Let~$t$ be a transaction that appears in~$v$, that belongs to the user~$u$ and transfers amount of~$m$ to another account.
Roughly speaking, $t$~is considered valid if it is indeed the next transaction in the transaction-chain of~$u$, and if~$u$ has a balance of at least~$m$.
We start with deciding if~$t$ is indeed the next transaction in~$u$'s chain.
First, we consider all the accepted transactions of~$u$ in~$G$ (ignoring identical transactions).
Assuming there are~$N$ such transactions, their sequence numbers must cover all the range of 1 to~$N$, or else~$t$ will be considered invalid, as that user has no valid transaction chain.
The sequence number of~$t$ must be~$N+1$ in this case, and there must be no other, different, transaction of~$u$ with the sequence number of~$N+1$ that appears in~$G$.\hideB{ There might be however another transaction that is identical to~$t$, as long that it appears in a \startB\ of another bank,~$B'$ (where $B'\neq B$).
In this case this means that~$u$ has resubmitted~$t$ to~$B$, maybe because~$B'$ stopped responding. 
Note that there cannot be a matching \acceptB\ for that \startB\ of~$B'$ (in~$G$), because then this identical transaction would either be accepted, and then~$t$ is not valid because it should have a bigger sequence number, or otherwise it would be rejected, which means that there is another, different, transaction with the same sequence number ($N+1$), so~$t$ is, once again, invalid.}
\extraB{Next, we check the balance of~$u$ by computing the total balance according to~$G$ (the subgraph of the \startB\ $v$), and make sure that it has a balance of at least~$m$.
}\hideB{The second requirement for~$t$'s validity is that~$u$ has a balance of at least~$m$. We decide the balance of~$u$ by computing the total balance according to~$G$ (the subgraph of the \startB\ $v$).
}If~$t$ fulfills both requirements it is considered valid.

Next, we need to define the validity requirements of an \acceptB.
Let~$v_a$ be an \acceptB\ that belongs to bank~$B$, 
and let~$v_s$ be the \startB\ that comes before~$v_a$ in~$B$'s chain.
The creation of~$v_a$ signals that the transactions of~$v_s$ should be accepted.
$B$ can create~$v_a$ only once that~$v_s$ is in consensus.
Consensus in our system is defined by coin possession.
Each user has a voting power that is proportional to his balance, and that voting power is delegated to the user's bank.
$v_a$ is considered valid if it is evident from its subgraph that there is a set of banks, $S$, that together hold voting power above a predefined threshold, and that they all acknowledged~$v_s$.
\hideB{More precisely, a bank $B'$ is in~$S$ if (and only if) it has a node~$v$ such that (1) $v_a$ acknowledged $v$ and (2) $v$ acknowledged $v_s$.
Recall that a node~$v_1$ acknowledges another node,~$v_2$, if $v_2$ appears in~$v_1$'s subgraph.
Recall also that the sequence of nodes of~$B$ starting from~$v_s$ and ending at~$v_a$ are considered a single block (see \cref{fig:graph}).
If $B'\in S$ we say that~$B'$ \mydef{supports}~$v_a$'s block.
}To finish the definition of validity requirements of an \acceptB, we need to define how to compute the voting power of each bank\hideB{ and to define the threshold of required supporting voting power}.
We shall soon do so.
%

When computing the total balance, we mentioned above that there is a problem if two blocks that contain conflicting transactions don't know of each other.
In such case, the two conflicting transactions will be both applied when computing the total balance, and that might result in an account that has a negative balance.
\hideC{Such account has managed to spend more than what it originally had, and unlike in real life, no one can come to that account owner and ask the money back.
}To avoid this problem, we want to make sure that all the blocks know each other. 
This brings us to the following definition.
\begin{definition}[Proper graph]
We say that a \blockgraph\ is \mydef{proper} if for every pair of \acceptB s, at least one of them acknowledges the \startB\ of the other. 
\end{definition}
\hideB{We shall prove later that computing
}\extraB{Computing }the total balance of a \blockgraph\ that is proper and valid results in non-negative balance for all the accounts.
Thus, we have big interest in proper graphs.
\extraB{Roughly speaking, we can assure that the graphs will be proper by requiring that every \acceptB\ will be supported by majority of the voting power.}\hideB{ ~
How can we make sure that a (valid) \blockgraph\ will remain proper?
I.e., that for every two \acceptB s, at least one of them acknowledges the \startB\ of the other?
Roughly speaking, we can get this effect by making sure that every block (/\acceptB) is supported by the voting power of majority of the banks.
In this case, every two different blocks (/\acceptB s) will be obliged to share some common supporting voting power.
I.e., there must be at least one bank that supports both blocks.
Let~$B$ be such bank, that supports two blocks, $b_1$ and $b_2$, and assume it first acknowledges $b_1$ and then acknowledges $b_2$.
When $b_2$ acknowledges $B$'s acknowledgment on $b_2$ itself, it acknowledges also the acknowledgment of~$B$ on~$b_1$, and so it indirectly acknowledges~$b_1$.
As a result, the two blocks are indeed connected.

We shall now define how to compute the voting power distribution.
}\extraB{~ ~}Every \blockgraph\ defines a specific voting power distribution between the banks.
In its most basic form, the voting power of a bank is the amount of money belong to the users of this bank.
Thus, to compute the voting power distribution, we start by computing the total balance according to the graph.
Recall that computing the total balance considers only accepted blocks (i.e., completed blocks -- \startB s that are followed by matching \acceptB s).
This is not enough, however.
We might have a problem with ``open blocks''~--~\startB s that don't have a matching \acceptB.
If a transaction in such an open block transfers money from one bank to another (or, more precisely, between users of these banks), then the transferred money is ``in transit'' between the two banks.
We name the bank from which money is transferred as the ``source bank'', and the bank that receives the money as the ``destination bank''.
The true concern is when such a transaction appears in a \startB\ of a bank which is not the source bank. 
In such case, some banks might see extensions of this graph, where that open block is already closed, and the money no longer belongs to the source bank, but to the destination bank. 
Yet, the source bank might not be aware of this ``closure'', and it might keep using that money as part of its own voting power.
The result is that the same money(/voting power) might be used to support different blocks, that might not be aware of each other, and this might result in a non proper graph.
To avoid this problem, we define such money to be \mydef{shared} between the two banks (the source and destination banks).
A shared money can support a block only if all the banks that share it support that block.
\hideB{

An exception is the case where such a problematic transaction, that appears in an ``open'' block, is doomed to be rejected because there is already another (or identical) transaction of the same user with the same sequence number that is accepted according to this graph.
On the contrary, there might be several conflicting problematic transactions, in different open blocks. 
In this case, the money will be shared between the source bank and all possible destination banks.}
More formally we define the following set of problematic transactions:
\begin{definition}[Uncertain transactions set]
Let~$G$ be a \blockgraph.
We define the set of \mydef{uncertain transactions} in~$G$ as follows.
We start with the set of transactions that appear in \startB s that don't have matching \acceptB s in~$G$.
Let~$t$ be a transaction from this set, that appears in an ``open'' \startB\ that belongs to bank~$A$ and that transfers money from an account in bank~$B$ to an account in bank~$C$.
The set of uncertain transactions includes~$t$ exactly if: 
(1) $A\neq B\neq C$ (possibly $A=C$), and
(2) there is no other transaction of the same user with the same sequence number as~$t$ that is applied when computing the total balance according to~$G$.
\end{definition}
\newcommand{\footnoteUncertain}{\footnote{A more accurate way to define the shared voting power is as follows.
First, find the transaction that wishes to transfer the lowest amount of money, and define its money has shared between all the banks from the uncertain transactions set of the user.
Then, find the transaction with the next lowest amount of money, and define the difference between the two transactions as shared between all the banks from the set above, removing the previous transaction.
Keep doing so until no transactions left.
When using this method, each bank will have a ``shared'' part that is equal to the amount that it can actual get, and no more than it.}}
\hideB{
The money that is transferred by uncertain transactions is exactly the money that we cannot be sure to which bank its corresponding voting power should be delegated -- the source bank or the destination bank.
Thus, we will define that money (or, more precisely, the corresponding voting power) as shared between the two banks.
As we shall now prove, if a user has several uncertain transactions, then they all have the same sequence number.
If our graph is proper (and remains proper), this means that at most one of these transactions might be later accepted, and so that money will belong either to only one of the destination banks (if one of the transactions gets accepted) 
or to the source bank (in case all the transactions get rejected).
Thus, we define that money as shared between the source bank and the destination banks of all those conflicting transactions (as exactly one of them might get that money).
Note that those conflicting transactions might wish to transfer different amounts of money, so we simply take the biggest amount among those amounts of money.
\footnoteUncertain\ 
We shall now prove what we claimed above:
\begin{lemma}
Let~$G$ be a \blockgraph\ whose \startB s are all valid, and let $t_1$ and $t_2$ be two transactions that appear in~$G$.
If $t_1$ and $t_2$ are uncertain transactions of the same user, then both have the same sequence number.
\end{lemma}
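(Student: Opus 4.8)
The plan is a proof by contradiction resting only on the validity requirement of \startB s. Let $u$ be the common source account of $t_1$ and $t_2$, and suppose their sequence numbers differ; by symmetry, assume $t_1$'s sequence number is strictly smaller. Being uncertain, each of $t_1,t_2$ occurs in an ``open'' \startB\ of $G$ --- say $t_1$ in $v_1$ and $t_2$ in $v_2$ --- and both $v_1$ and $v_2$ are valid by hypothesis. I also use the elementary fact that whenever a node occurs in $G$ its entire subgraph is contained in $G$, since $G$ contains every parent and every referenced node of its nodes.

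First I would extract what validity of $v_1$ and $v_2$ says about $u$. By the \startB\ validity requirement applied to $v_1$, the accepted transactions of $u$ in the subgraph of $v_1$ (counting identical transactions once) carry exactly the sequence numbers $1,\dots,N_1$ for some $N_1$, and $t_1$ has sequence number $N_1+1$; applied to $v_2$, the accepted transactions of $u$ in the subgraph of $v_2$ carry exactly the sequence numbers $1,\dots,N_2$, and $t_2$ has sequence number $N_2+1$. Since $t_1$'s sequence number is strictly smaller, $N_1+1\le N_2$, so among the accepted transactions of $u$ in the subgraph of $v_2$ there is (exactly) one, call it $\tilde t$, whose sequence number equals $N_1+1$, i.e.\ the sequence number of $t_1$. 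As the subgraph of $v_2$ is contained in $G$, the transaction $\tilde t$ is accepted in $G$.

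To finish I would show that $\tilde t$ is a transaction of $u$, distinct from $t_1$, with the same sequence number as $t_1$, that is \emph{applied} in the total balance computation of $G$, which contradicts condition~(2) in the definition of an uncertain transaction applied to~$t_1$. For distinctness: if $\tilde t=t_1$ then $t_1$ itself is accepted in $G$, hence already committed, so it is not an uncertain transaction. For ``applied'': $\tilde t$ sits in a completed block whose \startB\ is valid, and a completed block's transactions enter the balance computation unless the subgraph of the block's \acceptB\ contains a transaction of $u$ conflicting with $\tilde t$, i.e.\ a different transaction of $u$ with sequence number $N_1+1$; a conflicting transaction that is itself accepted would be a second accepted transaction of $u$ with sequence number $N_1+1$ inside the subgraph of $v_2$, violating the validity of $v_2$.

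The step I expect to be the main obstacle is exactly this ``applied'' claim: one must further rule out a conflicting transaction of $u$ with sequence number $N_1+1$ that merely \emph{appears}, in some open block, inside the subgraph of $\tilde t$'s \acceptB\ and thereby keeps $\tilde t$ from entering the balance computation. Because a dishonest user may issue several distinct transactions sharing one sequence number, disposing of this case is the real work, and it is where one leans hardest on the ``accepted transactions cover $\{1,\dots,N\}$'' clause --- for the subgraph of $v_2$, and, if needed, for $\tilde t$'s own block. Everything else --- the reduction to validity of $v_1$ and $v_2$ and the extraction of $\tilde t$ --- is routine sequence-number bookkeeping.
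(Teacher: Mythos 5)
Your proposal is correct and follows essentially the same route as the paper's proof: use the validity of $t_2$'s \startB\ to exhibit an accepted transaction of $u$ carrying $t_1$'s sequence number inside that \startB's subgraph (hence inside $G$), contradicting condition~(2) of the uncertain-transactions definition for $t_1$. The ``applied vs.\ accepted'' obstacle you flag is dissolved by the paper's convention that the ``accepted transactions of $u$'' counted in the \startB\ validity requirement are precisely those applied in the total balance computation, so the paper's shorter argument closes in one step what you handle by the case analysis on $\tilde t$.
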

\begin{proof}
Let~$t_1$ and~$t_2$ be two uncertain transactions in~$G$ that belong to the same user,~$u$.
Assume by way of contradiction that they have different sequence numbers.
Without loss of generality, assume that~$t_1$ has a lower sequence number.
By the assumptions, the \startB\ where~$t_2$ appears must be valid.
According to the validity requirements of a \startB, there are in its subgraph (thus, in~$G$) accepted transactions of~$u$ with all the sequence numbers between 1 and one below the sequence number of~$t_2$.
Thus, there is an accepted transaction of~$u$ with the same sequence number as~$t_1$, and so~$t_1$ cannot be an uncertain transaction (according to the uncertain transactions definition).
%
\end{proof}
%
}
We now formally
define the voting power distribution: 
\begin{definition}[Voting power distribution]
Let~$G$ be a \blockgraph.
We define the voting power distribution of~$G$ as follows.
We start by computing the total balance, according to~$G$.
The initial voting power of each bank is the sum of money of its users.\hideC{

}
Next, we consider the uncertain transactions in~$G$, and divide them according to their issuing user. 
For each user we take the transaction that wishes to transfer the biggest sum of money.
That sum of money is decreased from the voting power of his bank, and is granted to a coalition that consists of his bank and the banks of all the users that might get money according to that user's set of uncertain transactions.
\end{definition}
We can now finish defining the requirements for an \acceptB,~$v_a$, to be valid.
Recall that we defined~$S$ to be the set of banks that support~$v_a$\hideC{ (where a bank~$B'$ supports~$v_a$ exactly if~$v_a$ acknowledges a node of~$B'$ that acknowledges~$v_s$ -- the \startB\ that comes before~$v_a$)}. 
The sum of voting power of banks from~$S$, including voting power that is shared exclusively by members of~$S$, must be above a predefined threshold that we shall soon consider.\hideC{
As we shall see, this threshold depends on what we assume about the banks nature.}

After defining the validity requirements of nodes, we say that a \blockgraph\ is valid if all of its nodes are valid.
We wish that a valid \blockgraph\ will be also proper.
If the banks are not malicious, then the voting power threshold required for an \acceptB\ to be valid can be defined as a simple majority (i.e., half the total voting power), and then we can prove that every valid graph is also proper.
However, an assumption that there are no malicious banks is not practical.
\hideC{We now extend our definitions in order to deal with malicious banks, and formally prove that valid \blockgraph s will be proper.
}A bank is considered malicious if it creates conflicting nodes -- different nodes with the same sequence number.
\hideC{The problem with such conflicting nodes is that they might seem valid for themselves (when observing their subgraphs), and different banks might adopt a different node of the conflicting pair.
Only when the valid banks find out that there are two conflicting nodes, they know that the bank that created them is malicious.
More precisely, if}\extraC{If} a \blockgraph\ contains conflicting nodes of a given bank, then this bank is considered malicious according to this \blockgraph.
\hideC{I.e., the \blockgraph\ itself holds the proof that the bank is malicious.

}We deal with malicious banks by defining \updateB s to be invalid if it is evident from their subgraph that their own bank is malicious.
I.e., a malicious bank cannot reference nodes of banks that already know that it is malicious, as it will make its \updateB s invalid and the other banks won't accept such invalid nodes.
\hideB{Another optional, cosmetic, requirement (which has no practical importance) is that \updateB s (of valid banks) won't reference nodes of banks that are already known to be malicious.
More precisely, if it is evident from the subgraph of the parent of an \updateB\ (the node that comes before that \updateB\ in its bank's chain) that a specific bank is malicious, then this \updateB\ should not directly reference a new node of that bank.}

Recall that we want graphs to be proper, so that conflicting transactions won't be accepted together.
With malicious banks around, we must make sure they don't turn our graphs to non proper.
The ``easiest'' way for a malicious bank to turn a graph to be non proper is by issuing two conflicting \startB s, each accompanied with an \acceptB\ of its own.
However, a bank cannot create a (valid) \acceptB\ at its will, as it needs the support of the majority of the voting power (above the predefined threshold).
We claim that the voting power of a non malicious bank can support at most one of such conflicting \acceptB s (will be proved later on).
Thus, by defining the threshold of the required voting power to be high enough, and by assuming that the voting power in the hands of malicious banks is limited, we prevent malicious banks the option of creating such two conflicting \acceptB s (that are based on different \startB s). \hideB{

}But, what prevents a malicious bank from creating two conflicting \acceptB s for a single \startB?
I.e., after gathering the required support for a valid, single, \startB, a malicious bank might create two conflicting, and valid, \acceptB s.
\extraB{The problem is if one of the transactions in the \startB\ will be accepted according to one \acceptB, but rejected as conflicting according to the other.
}\hideB{Assume that the malicious bank~$\bar{B}$ indeed creates such two conflicting \acceptB s $v_1$ and $v_2$ for a single \startB~$v$.
Moreover, assume that there is a transaction~$t$ that appears in~$v$ that is accepted at~$v_1$ but rejected at~$v_2$ (because there is an \updateB\ that comes before~$v_2$, but not before $v_1$, that references another block that contains a transaction that is conflicting with~$t$).
The result is that some of the banks might see~$v_1$, so they shall think that~$t$ should be applied, while other might see~$v_2$ and think that~$t$ should be rejected.
The problem in the above case is not the temporary situation where some banks accept an extra transaction~($t$) while others don't, as the latter banks shall sooner or later see also~$v_1$, that accepts~$t$, and then they should apply it as well.
Instead, the true problem is that this state can lead to creation of a valid graph that is not proper.
The reason is that~$v_1$ and~$v_2$ induce different voting power distributions.
Considering the voting power that corresponds to the amount of money that is transferred by~$t$, that voting power belongs to the destination bank, according to~$v_1$, and to the original bank according to~$v_2$.
Thus, each of the banks might think that this voting power is in its own hands, and each of them might use it to support a different block.
As a result, two different blocks might enjoy the support of the same voting power without knowing of each other, which might produce a non proper graph.

}
In order to solve this problem we separate the \textbf{actual} acceptance/rejection of a transaction (at an \acceptB) from the decision of whether it \textbf{would} be accepted or rejected. 
We introduce a new node to the \blockgraph\ -- a \mydef{\closeB}, that must appear exactly once between every pair of \startB\ and \acceptB.
The validity requirement of a \closeB\ is exactly as the requirement of an \acceptB, i.e., it must have a supporting voting power above a predefined threshold.
\hideB{More precisely, let~$v_s$ be a \startB, and let~$v_c$ and~$v_a$ be the \closeN\ and \acceptN\ nodes respectively that come after~$v_s$.
We define two set of banks,~$S_c$ and~$S_a$.
A bank $B'$ is in~$S_c$ if and only if it has a node~$v$ such that (1) $v_c$ has acknowledged $v$ and (2) $v$ has acknowledged $v_s$.
A bank $B''$ is in~$S_a$ if and only if it has a node~$v'$ such that (1) $v_a$ has acknowledged $v'$ and (2) $v'$ has acknowledged $v_c$.
$v_c$ will be valid if the sum of voting power of the banks in~$S_c$, computed according to~$v_c$'s subgraph, is above a predefined threshold, and $v_a$ will be valid under the same condition when considering the set of~$S_a$, and the voting power distribution according to~$v_a$'s subgraph.

}Using the \closeB, a malicious bank might still create two conflicting \acceptB s, but both of them will be based on the same \closeB\extraB{, and so the same set of transactions will be accepted or rejected in both.}\hideB{ (just as before that, two conflicting \acceptB s had to be based on the same \startB, or otherwise they wouldn't get the required supporting voting power).
Next, we change the definition of the total balance computation, such that a transaction is considered conflicting (and get rejected) only if it is evident from the corresponding \closeB\ (instead of the \acceptB):
\begin{definition}[Total balance \#2]
Given a \blockgraph, we take the initial balance according to the \initB, and then for every \acceptB\ we apply every transaction that appears in its matching \startB, as long that the subgraph of its matching \closeB\ doesn't contain a conflicting transaction, and as long that the same transaction wasn't already applied before that.
\end{definition}
Now, two conflicting \acceptB s that are based on a single \closeB\ have no effect.
The same transactions will be accepted or rejected in both.}
Note that a malicious bank might also create conflicting \closeB s (based on a single \startB), but as long that it will be able to complete only one of them with a matching \acceptB, there will be no effect.
This motivates updating the definition of proper graph:
\begin{definition}[Proper graph \#2]
We say that a \blockgraph\ is \mydef{proper} if 
(1) for every pair of \closeB s, at least one of them acknowledges the \startB\ of the other, and (2) for every pair of \acceptB s, at least one of them acknowledges the \closeB\ of the other.
\end{definition}
The first requirement, concerning \closeB s, is equal to the original requirement from proper graph, that makes sure that there is a connection between every two blocks, so that conflicting transactions won't be applied together.
The second requirement, concerning \acceptB s, comes to deal with malicious banks.
\hideC{I.e., it makes sure that malicious banks don't creates distinct \acceptB s to distinct (conflicting) \closeB s.}
We can now have the following claim (proved in \Cref{appendix:positive}):
\begin{lemma}
\label{lem:positive}
Computing the total balance of a \blockgraph\ that is valid and proper results in non-negative balance for all the accounts.
\end{lemma}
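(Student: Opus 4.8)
The plan is to reduce the claim to a one-node-at-a-time induction. Since a \blockgraph\ is acyclic, I would fix a topological ordering of the nodes of $G$ in which every node precedes all the nodes of its subgraph (so the \initB, which lies in every subgraph, comes last), and for each $i$ let $G_i$ be the subgraph spanned by the $i+1$ last nodes together with the edges of $G$ among them. Every node of $G_i$ retains all its outgoing edges --- its parent and its referenced nodes lie in its subgraph, hence later in the order, hence in $G_i$ --- so $G_i$ is a \blockgraph\ by \Cref{lem:subgraph}; it is valid since node validity depends only on the (unchanged) subgraph, and proper since a violating pair of \closeB s or \acceptB s in $G_i$ would keep its subgraphs in $G$ and make $G$ itself improper. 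This exhibits $G$ as the last term of a chain $G_0\subseteq\dots\subseteq G_k=G$ of valid and proper \blockgraph s, where $G_0$ is just the \initB\ and each $G_{i+1}$ adds exactly one node.

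I would then prove by induction on $i$ that the total balance of $G_i$ is non-negative in every account; the case $i=k$ is the lemma. For $G_0$ this is immediate, as the \initB\ gives every listed account a positive amount and every other account $0$. For the step, write $G_{i+1}=G_i\cup\{v\}$. If $v$ is a \startB, \closeB\ or \updateB, then no existing node's subgraph changes and no block is completed, so $G_i$ and $G_{i+1}$ have the same accepted transactions, hence the same total balance, and we are done. The only substantive case is $v$ an \acceptB: it closes the block of some \startB\ $v_s$ whose \closeB\ is already in $G_i$, and the accepted transactions of $G_{i+1}$ are those of $G_i$ plus the transactions of $v_s$ not conflicted in that \closeB's subgraph and not previously applied. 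Only one such transaction can belong to any given user, since two transactions of one user inside the single \startB\ $v_s$ must carry the same sequence number and therefore conflict, so neither is applied. Hence, after the update, every account's balance equals its old balance plus what it receives in the new block minus (the amount of its one applied outgoing transaction $t$, if any).

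It therefore suffices to show that whenever $u$ sends $m$ via an applied transaction $t$ of the new block, $u$ still has non-negative balance --- equivalently, $\mathrm{bal}_{G_i}(u)\ge m$ --- while accounts that only receive stay non-negative by the induction hypothesis. Let $G_s$ be the subgraph of $v_s$, a subgraph of $G_i$. Validity of $t$ in $v_s$ gives $\mathrm{bal}_{G_s}(u)\ge m$, so it is enough to prove $\mathrm{bal}_{G_i}(u)\ge\mathrm{bal}_{G_s}(u)$, for which (acceptedness being monotone under adding nodes) it is enough that every accepted transaction of $u$ in $G_i$ is already accepted in $G_s$. Suppose not, and take an accepted transaction $t'$ of $u$ present in $G_i$ but not in $G_s$; let $N+1=\mathrm{seq}(t)$, so by validity of $v_s$ the graph $G_s$ already holds accepted transactions of $u$ with all sequence numbers $1,\dots,N$. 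If $\mathrm{seq}(t')\le N$, then $t'$ and the $G_s$-transaction with the same sequence number are two distinct conflicting transactions accepted in $G_i$, contradicting properness of $G_i$. If $\mathrm{seq}(t')=N+1$: in fact $G_i$ can hold no accepted transaction of $u$ with sequence number $N+1$, since such a transaction is either $t$ itself --- impossible, as $t$ is first accepted at $v$ and so is not accepted in $G_i$ --- or conflicts with $t$, impossible because $G_{i+1}$ is proper and accepts $t$. If $\mathrm{seq}(t')>N+1$, the \startB\ carrying $t'$ lies in $G_i$, hence is valid, so its subgraph (contained in $G_i$) must contain an accepted transaction of $u$ with sequence number $N+1$, contradicting the previous case. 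So no such $t'$ exists, giving $\mathrm{bal}_{G_{i+1}}(u)\ge\mathrm{bal}_{G_i}(u)-m\ge\mathrm{bal}_{G_s}(u)-m\ge 0$ and closing the induction.

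The main obstacle is this last case analysis. It is where one genuinely needs properness of both $G_i$ and $G_{i+1}$ --- so it matters that the whole constructed chain consists of proper graphs, not just $G$ --- and where the \startB\ validity requirement has to be used recursively to rule out later-numbered transactions of $u$. Everything else is bookkeeping, but some care is needed to keep straight which \closeB's subgraph governs rejection and why a single newly completed block contributes at most one applied transaction per user.
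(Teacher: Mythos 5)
Your proposal is correct and follows essentially the same route as the paper's own proof: the same reduction to a chain of one-node extensions that are all valid and proper (via \Cref{lem:subgraph} and a topological order), the same reduction of the \acceptB\ case to showing that every accepted transaction of the sender in $G_i$ is already accepted in the subgraph of the matching \startB, and the same three-way case analysis on sequence numbers using properness of $G_i$ and $G_{i+1}$. The only addition is your explicit remark that a single new block applies at most one outgoing transaction per user, which the paper leaves implicit.
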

\hideC{
We want graphs to always be proper, where conflicting transactions cannot be accepted together 
and where balances are kept non negative.}
As we shall now claim, valid graphs will be proper indeed, assuming the voting power of malicious banks is limited.
In classical distributed systems theory we assume a fixed set of servers or nodes.
Dealing with malicious (more commonly called \mydef{byzantine}) nodes often involves an assumption that more than two thirds of the nodes are non-malicious (or, non byzantine).
This is similar in our case, only that there is no importance here to the number of the nodes (i.e., banks), but rather to the voting power distribution.
Voting power is considered ``valid'' if it belongs to a non-malicious bank, or shared by only non-malicious banks.
The voting power distribution should always be such that more than two thirds of the total voting power is valid.
Following that assumption, we define the threshold of required voting power for creation of valid \acceptN\ and \closeN\ nodes to be at least two thirds of the total voting power.\ 
\hideB{

}We are now ready for the main claim of this subsection\extraB{ (the proof appears in \cite{FullPaper})}\hideB{ (proved in \Cref{appendix:proper})}:
\begin{lemma}[Proper graphs]
\label{lem:proper}
Let~$G$ be a valid and proper \blockgraph.
If more than two thirds of the voting power in every subgraph of~$G$ is in the hands of banks that are not malicious according to~$G$, 
then extending~$G$ by a single valid node will result in valid and proper graph. 
\end{lemma}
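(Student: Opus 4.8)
The plan is to argue by contradiction on properness; validity is immediate, since all nodes of~$G$ are valid by hypothesis and the added node~$v$ is valid by hypothesis, and the other \blockgraph\ requirements are inherited. Write $G_v$ for $G$ extended by~$v$. A non-proper $G_v$ must contain an \mydef{ignorant pair} --- two \acceptB s neither of which acknowledges the \closeB\ of the other, or two \closeB s neither of which acknowledges the \startB\ of the other. Since~$G$ is proper, every ignorant pair of $G_v$ must involve~$v$, so~$v$ must be a \closeB\ or an \acceptB; I would treat the \acceptB\ case, the \closeB\ case being symmetric.

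First I would reduce to a single ignorant pair. Starting from $G_v$, repeatedly delete a node together with every node whose subgraph contains it; this keeps the graph a \blockgraph\ (by \Cref{lem:subgraph}) and can only shrink the set of ignorant pairs. Given two ignorant pairs $\{v,v_1\}$ and $\{v,v_2\}$, since $G_v$ has no cycles at least one of $v_1,v_2$ does not acknowledge the other, and~$v$ acknowledges neither (else the pairs would not be ignorant); so I may delete whichever of $v_1,v_2$ is not acknowledged by the other and destroy one of the two pairs without creating a new one. Iterating yields a subgraph $G'$ of $G_v$ with exactly one ignorant pair $\{v_a,v_b\}$, one of $v_a,v_b$ being~$v$.

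The core of the argument is then that $v_a$ and $v_b$ cannot both carry the required supporting voting power: the power supporting $v_a$ according to its subgraph $G_a$, plus the power supporting $v_b$ according to $G_b$, is strictly below $4/3$ of the total, so since each needs at least $2/3$ to be valid, one of them is invalid --- contradicting validity of~$v$ and of~$G$. The facts I would establish are: (i) by \Cref{lem:subgraph}, $G_a$, $G_b$ and their intersection $G_{ab}$ are valid proper \blockgraph s; (ii) a non-malicious bank supports at most one of $v_a,v_b$, since its earliest \updateB s seeing the two \closeB s are comparable in its fork-free chain, which would force one of $v_a,v_b$ to acknowledge the other's \closeB; (iii) in the voting-power distribution of $G_{ab}$, if $x$ is the valid power and $y<1/3$ the invalid power with $x+y=1$, then at most $x+2y=1+y<4/3$ of the power supports $v_a$ and $v_b$ there. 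The remaining, and hardest, step is to transfer this bound from $G_{ab}$ to $G_a$ and $G_b$: I would show that no completely new block appears in $G_a$ or $G_b$ (a new \acceptB\ forces a new matching \startB\ and \closeB\ that would form a second ignorant pair with~$v_b$ unless~$v_a$ already acknowledges~$v_b$'s \closeB), that brand-new open \startB s can only decrease supporting power, and then, for each user~$u$ in a non-malicious bank~$B$, trace what can happen to the part of~$x$ that~$u$ contributes, using propriety of~$G'$ and the absence of forks in~$B$ to rule out any change (a new \acceptB\ of~$B$, or a transaction of~$u$ that becomes accepted, rejected, or overridden) that would let~$u$'s valid power support both $v_a$ in $G_a$ and $v_b$ in $G_b$.

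The main obstacle is precisely this last bookkeeping over uncertain transactions: one must enumerate the ways the voting-power distribution can differ between $G_{ab}$ and $G_a$/$G_b$ --- a new \acceptB\ for an existing \startB, a new open \startB, or a transaction ceasing to be uncertain because it is rejected at a new \acceptB\ or because a conflicting transaction with the same sequence number is accepted --- and in each case extract a pair of conflicting \closeB s or \acceptB s of a malicious bank, which would be a second ignorant pair in~$G'$. That is impossible by construction, and it cannot be $\{v_a,v_b\}$ itself since those nodes lie outside their own subgraphs. Once every such case is closed, each unit of valid power from $G_{ab}$ still supports at most one of $v_a,v_b$, the bound $x+2y<4/3$ survives, and the contradiction is reached, proving $G_v$ proper.
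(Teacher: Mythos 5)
Your proposal follows the paper's own proof essentially step for step: the reduction to a single ignorant pair by deleting nodes, the three subgraphs $G_a$, $G_b$, $G_{ab}$, the $x+2y<4/3$ bound, the claim that a non-malicious bank supports at most one of $v_a,v_b$ via its fork-free chain, and the transfer of the bound from $G_{ab}$ to $G_a$ and $G_b$ by ruling out new blocks and by extracting a second ignorant pair from every way a transaction could cease to be uncertain. The approach and all key lemmas match the paper's argument; only the final case-by-case bookkeeping is left in outline, but the cases you enumerate and the contradiction you extract in each are exactly those the paper works through.
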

%

\subsection{The protocol}\label{sec:prot}
In the previous subsection we described the \blockgraph\ that we use as a ledger.
We now define the protocol~$P$ that the banks should follow.
Every bank that takes part in the system must start with a \blockgraph\ that contains only the \initB\ that matches~$I$.
As time passes, each bank might add to its \blockgraph\ new nodes that it creates or receives from other banks.
\extraB{A basic protocol for the banks is to create valid \startN, \closeN\ and \acceptB s whenever they can.
If they receive nodes they from other banks, they must make sure that they are valid and that they don't make their graph improper, and then they introduce those nodes to their graphs and create corresponding \updateB s.
For every \acceptB\ that they introduce to their graph, they should \mydef{accept} the relevant transactions.
}\hideB{The protocol we use is rather straightforward: 
\Protocol
}
\extraB{Under the assumption that there is a finite set of non malicious banks that follow the above protocol and that together they always hold more than two thirds of the total voting power, we can prove that all the requirements that were defined in \sec{model} will be fulfilled. 
More information and a complete protocol and proof appear in \Cref{appendix:protocol}.}

\section{Conclusions}\label{sec:conclusions}
%
%
Cryptocurrency systems should allow users to easily transfer money between them, without excess bureaucracy and supervision.
The complexity of creating such systems is in managing the required consensus, regarding the accepted transactions, in a permissionless settings where every one can take part in the consensus process.
In \Cref{sec:model} we defined requirements that should form a cryptocurrency system that is ``fair'' to its users:
A user that creates valid sequence of transactions can be sure that its transactions will be accepted.
Moreover, the transactions, once accepted, will remain accepted, so the recipient can be sure that he got the money\extraB{  (this is not the case when using Blockchain)}.
\hideB{This requirement doesn't apply with blockchain, where a sequence of blocks might be accepted as part of the blockchain, and later be ignored, in favor of another, parallel, longer sequence.
}On the other hand, the requirements we define are flexible enough so that we can implement a system that satisfies the requirements under completely asynchronous networks. 
This is achieved by allowing the rejection of conflicting transactions without forcing acceptance of one of them. 

As a solution for the problem described in \Cref{sec:model}, we introduce a system 
that has beneficial characteristics of its own.
The system bases the consensus on the coin possession, so we avoid the great waste of energy that exists in proof of work coins.
Moreover, the latency in the system is only due to message transmission times and the time it takes to perform the basic required computations of validating/applying signatures.

Security in our case is very plain and simple.
By employing a deterministic protocol, that works even when communication is asynchronous, the only way to undermine the security is if more than one third of the voting power will be in malicious hands, or, more precisely, in your hands (it doesn't help if the other malicious banks don't cooperate with you).
%
%
%
\hideB{\subsection{Future Work}
There are yet many open challenges for implementing the system presented in this paper, and further research is required.
For example, we might want to spare memory, and not to remember the entire \blockgraph. 
The question is what can we forget, and under what conditions.
Another challenge is with large number of banks.
The more banks we have, the bigger the requirement for memory and bandwidth.
At some point, if there will be banks that won't have strong enough hardware, and their voting power on the other hand will be significant (so they are required for the consensus process to complete), it might cause extra delay in the system.

An existing downside in many cryptocurrencies including Bitcoin and the coin presented in this paper is that the system is completely transparent.
I.e., every transaction that is accepted is visible to the public, so that everyone can see the source and destination accounts, and the sum of money that was transferred.
Yet, the identities of the account owners are of course generally unknown. 
An interesting direction in our case is if we can limit the transparency to the level of banks, so that the only entities that truly know the transaction source, destination and sum of money (all together) are the banks of the issuing and receiving clients.

An important concept in cryptocurrencies at which consensus is based on coin possession, such as in Proof of Stake, is that malicious entities will be financially damaged because of their acts.
Otherwise, different entities (and especially strong entities, that posses a lot of coin) might maliciously attempt to maneuver the currency, without getting harmed.
This is the famous \mydef{nothing at stake} problem.
Note that this is never true in practice, because an unstable coin will have a lower value, so such entities that posses a lot of coin have an interest in keeping the coin credible.
Yet, if the coin value is decreased, this will harm also non malicious parties.
Thus, the best practice is indeed financially damaging those malicious entities.

Note that in our case banks should have a private account, to where they receive the commission on the users' accepted transactions.
A best practice will be to define that transactions that wish to transfer private money of a bank will be allowed to appear only in \startB s of that banks itself.
Recall that a malicious bank (that made a malicious act) won't mange to create new blocks, as it cannot reference nodes of other banks that are already aware to its malice so it cannot gain the required support for a block.
Thus, a malicious bank won't be able to use its private money (and once it turned out to be malicious, it won't receive any more commission).
If we can incentivize a bank to keep money in its private account, this can be used as means to make sure the bank follows the protocol, as if it won't, then it will lose that money.
This will be similar to real banks, that by regulatory requirements, in order to ensure their stability, must have some capital of their own (such stability is not relevant to our banks, as they cannot use their clients' money, as opposed to real world banks).

The most reasonable way to incentivize the banks to keep private money is by limiting their voting power in case they don't have enough private money.
I.e., we can define a percentage of private money out of the voting power of a bank that the bank must have in order to use its full voting power.
If it doesn't have the required percentage, its voting power will be decreased proportionally.
A more research is required in order to make sure that the system still complies with the agreement and termination requirements.
}


\bibliographystyle{abbrv}
\bibliography{Crypto}

\appendix

\section{Execution}
\label{appendix:exec}
\Execution

\section{Proof of \Cref{lem:positive}}
\label{appendix:positive}
\LemmaPositive

\section{Proof of \Cref{lem:proper}}
\label{appendix:proper}
\LemmaProper

\end{document}